\let\csname equation*\endcsname\relax
\let\csname endequation*\endcsname\relax
\newcommand{\borb}[2]{\left | #1 \rangle\langle #2 \right |}
\newtheorem{theorem}{Theorem}
\newtheorem{conjecture}[theorem]{Conjecture}
\newtheorem{lemma}[theorem]{Lemma}
\newtheorem{proposition}[theorem]{Proposition}
\newtheorem{corollary}[theorem]{Corollary}
\begin{document}

\title{{Universality of finite time disentanglement}}

\author{Raphael C. Drumond}
\email{raphael@mat.ufmg.br}
\affiliation{Departamento de Matem\'{a}tica, Instituto
de Ci\^{e}ncias Exatas, Universidade Federal de Minas Gerais, CP
702, CEP 30123-970, Belo Horizonte, Minas Gerais, Brazil.}
\author{Cristhiano  Duarte}
\email{cristhiano@mat.ufmg.br}
\affiliation{Departamento de Matem\'{a}tica, Instituto de
Ci\^{e}ncias Exatas, Universidade Federal de Juiz de Fora, CEP 36036-330, 
Juiz de Fora, Minas
Gerais, Brazil.}
\affiliation{Departamento de Matem\'{a}tica, Instituto
de Ci\^{e}ncias Exatas, Universidade Federal de Minas Gerais, CP
702, CEP 30123-970, Belo Horizonte, Minas Gerais, Brazil.}
\author{Marcelo  Terra Cunha}
\email{tcunha@ime.unicamp.br}
\affiliation{Departamento de Matem\'{a}tica, Instituto
de Ci\^{e}ncias Exatas, Universidade Federal de Minas Gerais, CP
702, CEP 30123-970, Belo Horizonte, Minas Gerais, Brazil.}
\affiliation{Departamento de Matem\'{a}tica Aplicada, Instituto
de Matem\'atica, Estat\'istica e Computa\c c\~ao Cient\'ifica, Universidade 
Estadual de Campinas, Cidade Universit\'aria Zeferino Vaz,
CEP 13083-970, Campinas, S\~ao Paulo, Brazil.}
\author{M. C. Nemes}
\thanks{Deceased}
\affiliation{Departamento de F\'isica, Universidade Federal de Minas Gerais,
CP 702, CEP 30123-970, Belo Horizonte, MG - Brazil
}
\date{\today}

\begin{abstract}
In this paper we investigate how common is the phenomenon of Finite Time Disentanglement (FTD) with respect
to the set of quantum dynamics of bipartite quantum states with finite dimensional Hilbert spaces. 
Considering a quantum dynamics from a general sense, as just a continuous family of  Completely Positive Trace Preserving maps (parametrized by  the time variable) acting on the space of the bipartite systems, we conjecture that FTD happens for all dynamics but those when all maps of the 
family are induced by local unitary operations. 
We prove this conjecture valid for two important cases: i)  when all maps are induced by unitaries;
ii) for pairs of qubits, when all maps are unital. 
Moreover, we prove some general results about unitaries/CPTP maps preserving 
product/pure states
\end{abstract}

\pacs{03.65.Ud, 03.65.Yz, 03.67.Bg, 42.50.Pq}

\maketitle

\section{Introduction}

Following the definition of entanglement as a resource for non-local
tasks, as a consequence being quantified~\cite{entanquant}, the time evolution of this
quantity was the subject of intense interest. Typically
a composite system will lose its entanglement whenever its parts interact 
with an
environment. It is of great interest then for practical implementations
of quantum
information protocols, that require entanglement, to understand how the amount of entanglement
behaves in time~\cite{review}. 

One characteristic of entanglement dynamics that drew a lot of attention was the possibility of an
initially entangled state to lose all its entanglement in a finite time, 
instead of asymptotically.
The phenomenon was initially called ``entanglement sudden 
death''~\cite{primeiramorte}, or Finite Time
Disentanglement (FTD). 
The simplest explanation for this fact is essentially 
topological:  for finite dimensional Hilbert spaces, the set
$\mathcal{S}$
of separable states, where 
entanglement is null, has non-empty interior, \emph{i.e.}, there are
``balls''
entirely consisted of separable states. 
Therefore, whenever an initially entangled state
approaches a separable state in the interior of $\mathcal{S}$, and given 
that the dynamics of the
state is continuous, it must spend at least a finite amount of time inside the set, so entanglement
will be null during this time interval~\cite{terra}. 

In references~\cite{terra2, terra3}, the authors explored how typical the 
phenomenon is {(for
several paradigmatic dynamics of two qubits and two harmonic 
oscillators)} when one varies the
initial
states for a fixed dynamics. 
Here we shall explore how 
typical it is with respect to the dynamics
themselves. 
More explicitly, given a dynamics for a composite system, should one expect to find some
initially entangled state exhibiting FTD? 
Here we argue that the answer is generally positive. 

The paper is organized as follows. In Section~\ref{ftd} we discuss about 
the generic existence of FTD and illustrate this discussion with a 
well-known example of a family of maps. In Section~\ref{tech} we go to the 
technical Lemmas and Theorems already used on Section~\ref{ftd}. 
We close this work with Section~\ref{Disc}, discussing further questions 
and open problems.     

\section{Finite time disentanglement}\label{ftd}

In a very broad sense, we can think a \emph{(continuous time) quantum dynamical system} as 
given by a family of completely positive trace preserving (CPTP) maps $\Lambda_{t}$, parametrized 
by the real time variable
$t$ for, say, $t \geq 0$. 
If a quantum system is in some state given by a 
density operator
$\rho_{0}$ at
$t=0$, for any {$t\geq0$} we have the system at the quantum 
state 
$\rho(t)=\Lambda_{t}(\rho_{0})$.
Of course, one must have $\Lambda_{0}=I$, where $I$ is the identity 
map. 
Although
in
some cases a discontinuous family of maps can be a good approximation to 
describe a process (for
example, when a very fast operation is performed on a system, or when the system will not be accessed during some time interval), strictly 
speaking the family of maps should be at least continuous. 

Generally speaking, fixed some
dynamics $\Lambda_{t}$, we say that it shows \emph{finite time 
disentanglement} (FTD) if there
exists an entangled state $\rho_{\text{ent}}$ and a time interval $(a,b)$, 
with $0<a<b\leq \infty$
such that $\Lambda_{t}(\rho_{\text{ent}})$ is a separable state  for all 
$t\in (a,b)$. In
Refs.~\cite{terra, terra2}, the authors point out that the occurrence of 
such effect is a
natural consequence of the set of separable states $\mathcal{S}$ having a 
non-empty interior.
Indeed, if an initially entangled state is mapped at some time $\bar{t}$ to 
a state in the interior
of $\mathcal{S}$, given the dynamics continuity, it must spend some finite 
time inside $S$ to reach
that state. During that time interval entanglement is null, although 
initially the system had
some entanglement. We shall formally state this fact for future reference:
\begin{proposition}
If a bipartite quantum dynamical system is such that, for some $\bar{t}>0$, 
there exists an 
initially entangled state $\rho_{ent}$ where its evolved state at time 
$\bar{t}$ is 
in the interior of the separable states, there is FTD.\label{fact}
\end{proposition}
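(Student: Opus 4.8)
The plan is to make rigorous the topological/continuity argument sketched informally in the text preceding the statement. The hypothesis gives us an entangled state $\rho_{\text{ent}}$ and a time $\bar{t}>0$ such that the evolved state $\Lambda_{\bar{t}}(\rho_{\text{ent}})$ lies in the \emph{interior} of the set $\mathcal{S}$ of separable states. I would first recall the two structural facts that make the argument work: (i) $\mathcal{S}$ is a closed subset of the (finite-dimensional) real vector space of Hermitian operators, and by hypothesis its interior is non-empty and actually contains the point $\Lambda_{\bar{t}}(\rho_{\text{ent}})$; (ii) the curve $t\mapsto\rho(t):=\Lambda_t(\rho_{\text{ent}})$ is continuous in $t$, since by assumption the family $\Lambda_t$ is a continuous family of CPTP maps. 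These are exactly the ingredients the paper grants me.

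The core step is then an elementary open-set/continuity argument. Because $\Lambda_{\bar{t}}(\rho_{\text{ent}})$ lies in the interior of $\mathcal{S}$, there is an open ball $B$ (in the trace norm, say, or any norm on the finite-dimensional space — they are all equivalent) centered at $\Lambda_{\bar{t}}(\rho_{\text{ent}})$ with $B\subseteq\mathcal{S}$. I would invoke continuity of the curve $\rho(\cdot)$ at $t=\bar{t}$: the preimage $\rho^{-1}(B)$ is an open subset of the time domain containing $\bar{t}$, so it contains an open interval $(a,b)$ with $\bar{t}\in(a,b)$. For every $t\in(a,b)$ we then have $\rho(t)\in B\subseteq\mathcal{S}$, i.e.\ $\Lambda_t(\rho_{\text{ent}})$ is separable throughout that interval.

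The last thing to check is the condition $0<a<b\le\infty$ from the definition of FTD, which requires the disentangling interval to be bounded away from $t=0$. This does not come for free from the open-interval argument alone, so I would handle it explicitly: since $\rho_{\text{ent}}$ is entangled it does not lie in $\mathcal{S}$, and $\Lambda_0=I$ gives $\rho(0)=\rho_{\text{ent}}\notin\mathcal{S}$; in particular $0\notin\rho^{-1}(B)$. Hence I may shrink the interval to ensure $a\ge 0$, and then observe that $a$ cannot equal $0$ — if it did, continuity would force $\rho(0)$ into the closure of $B\subseteq\mathcal{S}$, contradicting $\rho(0)\notin\mathcal{S}$ once one notes $\mathcal{S}$ is closed. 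Thus $a>0$, and $(a,b)$ is a valid interval witnessing FTD.

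I expect the genuinely nontrivial point — and the one I would be most careful about — to be precisely this boundary bookkeeping at $t=0$, namely guaranteeing $a>0$ rather than merely $a\ge 0$. The interior/continuity part is essentially immediate, but one must use that $\mathcal{S}$ is closed (so its complement is open and $\rho_{\text{ent}}$ has a separable-free neighborhood) together with $\Lambda_0=I$ to separate the initial entangled state from the disentangling window. Everything else is a direct translation of ``continuous image stays inside an open ball for a while'' into the formal statement, so no hard estimates are needed beyond the equivalence of norms in finite dimension.
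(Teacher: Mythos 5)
Your proof is correct and follows essentially the same continuity-plus-nonempty-interior argument that the paper itself uses (the paper states this proposition without a formal proof, as the formalization of the informal topological discussion immediately preceding it). One minor remark: the condition $a>0$ comes more cheaply than your closure argument suggests --- since $\bar{t}>0$, you may simply shrink the interval to $(\bar{t}-\epsilon,\bar{t}+\epsilon)$ with $\epsilon<\bar{t}$, so neither the closedness of $\mathcal{S}$ nor the fact that $\rho(0)\notin\mathcal{S}$ is actually needed for that step.
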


This proposition is one of the main reasons of why we believe the following 
general conjecture is
valid:

\begin{conjecture} Given a bipartite quantum dynamical system with 
finite dimensional
Hilbert space $\mathcal{H}_{A}\otimes \mathcal{H}_{B}$ and a continuous 
family of CPTP maps
$\Lambda_{t}$, there is no finite time
disentanglement if, and only if, for all $t>0$ there exists unitary 
operations $U_{A,t}$ and
$U_{B,t}$ acting on $\mathcal{H}_{A}$ and $\mathcal{H}_{B}$, respectively,
such that
$\Lambda_{t}(\cdot)=(U_{A,t}\otimes U_{B,t})(\cdot)(U_{A,t}\otimes 
U_{B,t})^{*}$. \label{conjecture}
\end{conjecture}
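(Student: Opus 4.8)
The plan is to treat the two implications of Conjecture~\ref{conjecture} separately, reducing the reverse implication to Proposition~\ref{fact}. The forward (``if'') direction is immediate: if every $\Lambda_t$ is a local-unitary conjugation, then for any entangled $\rho$ the state $\Lambda_t(\rho)$ differs from $\rho$ only by a local unitary and is therefore entangled for every $t$, so no entangled state ever becomes separable and there is no FTD. For the converse I would argue by contraposition, showing that whenever some $\Lambda_{t_0}$ with $t_0>0$ fails to be a local-unitary conjugation, the dynamics already carries an entangled state into the interior of $\mathcal{S}$ at time $t_0$, whence FTD follows from Proposition~\ref{fact}. The whole problem thus reduces to a rigidity statement: the only channels that never push an entangled state into $\mathrm{int}\,\mathcal{S}$ are the local-unitary conjugations.

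For case (i), where each $\Lambda_t=U_t(\cdot)U_t^{*}$, I would exploit that $\rho\mapsto U\rho U^{*}$ is an isometry of the Hermitian operators for the Hilbert--Schmidt inner product, hence volume preserving. The set $\mathcal{S}$ is a compact convex body of positive volume whose extreme points are exactly the pure product states. If $U_{t_0}\mathcal{S}U_{t_0}^{*}\neq\mathcal{S}$, then, since the two bodies have equal volume, neither can contain the other, so there is a state $s'\in\mathrm{int}\,\mathcal{S}$ lying outside $U_{t_0}\mathcal{S}U_{t_0}^{*}$; its preimage $U_{t_0}^{*}s'U_{t_0}$ is entangled and is mapped by $\Lambda_{t_0}$ into $\mathrm{int}\,\mathcal{S}$, giving FTD. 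Hence no FTD forces $U_t\mathcal{S}U_t^{*}=\mathcal{S}$ for every $t$; an affine bijection of $\mathcal{S}$ permutes its extreme points, so each $U_t$ sends pure product states to pure product states. I would then invoke the characterization of product-preserving unitaries advertised in the abstract, which forces $U_t=U_{A,t}\otimes U_{B,t}$, or $U_t=(U_{A,t}\otimes U_{B,t})\,\mathrm{SWAP}$ when $\dim\mathcal{H}_A=\dim\mathcal{H}_B$; continuity of the family together with $U_0=I$ excludes the swap component, leaving only local unitaries.

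For case (ii), two qubits with unital maps, the isometry/volume argument is unavailable because a unital CPTP map need not be invertible. Here I would instead use the canonical form of a unital channel on the four-dimensional composite system obtained from the singular-value decomposition of its action on the traceless Hermitian part, and analyze directly how it deforms $\mathcal{S}$: the goal is to show that any such channel that is not a local-unitary conjugation contracts some entangled state into $\mathrm{int}\,\mathcal{S}$, again closing the argument through Proposition~\ref{fact}.

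The main obstacle, and the reason Conjecture~\ref{conjecture} remains open in full generality, is precisely the passage from these two structured cases to arbitrary continuous families of CPTP maps. A general channel is neither invertible nor unital, so both the volume argument of case (i) and the extreme-point/normal-form analysis of case (ii) break down: the image may collapse in dimension, the maximally mixed state need not be preserved, and there is no clean rigidity theorem characterizing the channels whose range avoids $\mathrm{int}\,\mathcal{S}$ on the entangled states. Establishing such a separability-preserving-channel rigidity result, robust to non-invertibility and non-unitality, is the step I expect to be hard.
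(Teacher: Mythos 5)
Your proposal and the paper end in the same place on the top--level question: the statement is labelled a conjecture because the paper itself does \emph{not} prove it in general --- the Discussion says so explicitly --- and, like the authors, you only settle the easy (``if'') direction, reduce the converse to Proposition~\ref{fact}, and then handle the two special families (all maps unitary; two qubits with unital maps). So the right comparison is at the level of those two cases.

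For the closed--system case your route is genuinely different from the paper's, and it looks sound. The paper argues pointwise: by Corollary~\ref{corentan} (a consequence of Theorem~\ref{thm3}) a unitary that is neither local nor SWAP$\,\circ\,$local maps some entangled pure state $\ket{\psi_E}$ to a product state, and then the mixture $\lambda\,\mathds{1}/(d_Ad_B)+(1-\lambda)\borb{\psi_E}{\psi_E}$ is an entangled state carried into $\mathrm{int}\,\mathcal{S}$, so Proposition~\ref{fact} applies. You argue at the level of sets: Hilbert--Schmidt volume preservation forces $U_t\mathcal{S}U_t^{*}=\mathcal{S}$ in the absence of FTD (equal--volume convex bodies cannot strictly contain one another, and a separable witness slightly displaced toward the maximally mixed state lands in $\mathrm{int}\,\mathcal{S}$ outside $U_t\mathcal{S}U_t^{*}$), then extreme points force $U_t$ to preserve pure product states, then Theorem~\ref{thm3} plus connectedness (local and SWAP--local conjugations form disjoint closed sets and $\Lambda_0=I$) excludes the swap branch. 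Both arguments lean on the same key result, Theorem~\ref{thm3}; what yours buys is an explicit and correct treatment of the SWAP alternative --- the paper's own proof of Theorem~\ref{closedsystems} asserts that any non--local $U_{\bar t}$ sends some entangled state to a product state, which is false for SWAP--composed--local unitaries and is only repaired by the continuity/disjointness argument appearing later in the two--qubit proof. What the paper's version buys is constructiveness: an explicit entangled state exhibiting FTD.

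The genuine gap is your case (ii). What you offer there is a restatement of the goal (``show that any non--local--unitary unital channel contracts some entangled state into $\mathrm{int}\,\mathcal{S}$''), not an argument, and the SVD/canonical--form idea gives no indication of how to produce the required state. The paper's proof needs two concrete ingredients that your sketch lacks: first, the rigidity result (Theorem~\ref{thm2}) that a unital two--qubit channel preserving the purity of all maximally entangled states is a unitary conjugation --- proved via the lemmas that if $\ket{\phi}+e^{i\theta}\ket{\psi}$ is product for every $\theta$ then $\ket{\phi}$ and $\ket{\psi}$ are product --- whose contrapositive supplies a maximally entangled $\rho_E$ with $\Lambda_{\bar t}(\rho_E)$ mixed whenever $\Lambda_{\bar t}$ is unital but not unitary; second, the quantitative step comparing the smallest eigenvalue of the partial transpose of $p\rho_E+(1-p)\mathds{1}/4$ with that of $p\Lambda_{\bar t}(\rho_E)+(1-p)\mathds{1}/4$, which is what actually places the evolved state in $\mathrm{int}\,\mathcal{S}$ so that Proposition~\ref{fact} can be invoked. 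Without the first ingredient you have no candidate entangled state for non--unitary unital maps; without the second you cannot certify membership in the interior. As it stands, your proposal reproves (differently) one of the paper's two theorems and leaves the other, as well as the conjecture itself, unproved.
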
 

In physical terms, this says that FTD do not takes place only in the 
extremely special situation
where the pair of systems is closed (or at most interacting with a classical 
external field)
\emph{and} non-interacting. That is, whatever interaction they
may have, with each other \emph{or} with a third quantum system (such as a 
reservoir), FTD takes
place for some entangled state. From now on, we denote 
{the} 
family of dynamics {contained in 
Conjecture~\ref{conjecture}} by $\mathcal{F}_{\mathcal{H}_A,\mathcal{H}_B}$, 
that is: 
\begin{align}
\mathcal{F}_{\mathcal{H}_A,\mathcal{H}_B}&=\{ 
\{\Lambda_t(.)\}_{t \geq 0}; 
\,\, \{\Lambda_t(.)\}_{t \geq 0} \,\, \text{is continuous and}  \nonumber \\
&\Lambda_{t}(\cdot)=(U_{A,t}\otimes U_{B,t})(\cdot)(U_{A,t}\otimes 
U_{B,t})^{*}\}.
\end{align}

{Once again,} the intuition behind Conjecture 
\ref{conjecture} is geometrical. 
Figure~\ref{fig1} shows a pictorial 
representation of the set of quantum states when the Hilbert space is finite 
dimensional, with the 
distinguishing property of the set of separable states having non-empty 
interior. In
Figure~\ref{fig2} the arrows 
indicates the mapping of initial states to their corresponding evolved ones, 
on an instant of time 
$\bar{t}>0$. Note that all CPTP maps must have at least one fixed point, 
and 
all other states can 
not increase their distance to that fixed one, therefore for each instant 
of time $t \geq 0$ we can identify a 
``direction'' for the 
flow of states.  It is expected that if the flow is directed towards a 
separable state, some 
entangled states will be mapped inside the separable set (\ref{fig2}a). But 
even in the case where 
the flow is directed towards an entangled one, if the displacement is small 
enough, some entangled 
state located ``behind'' the set of separable  states will be mapped inside 
it (\ref{fig2}b). 
Below we prove this statement under some special conditions.

\begin{figure}
\includegraphics[width=5.0cm]{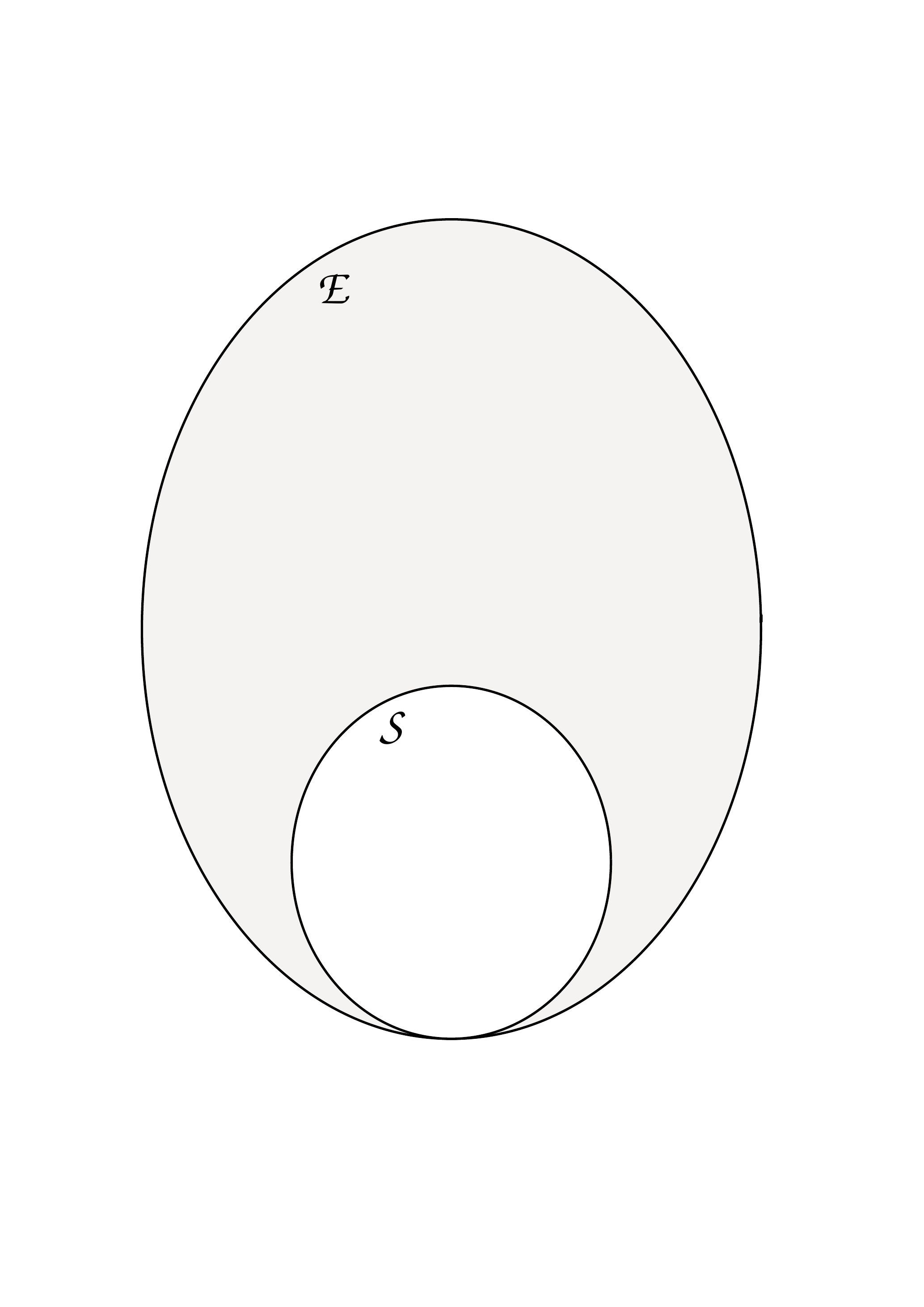}
\caption{Pictorial representation of set of quantum states when 
dim($\mathcal{H})<\infty$.}
\label{fig1}
\end{figure}

\begin{figure}[h]

\subfigure[]{\includegraphics[width=3.5cm]{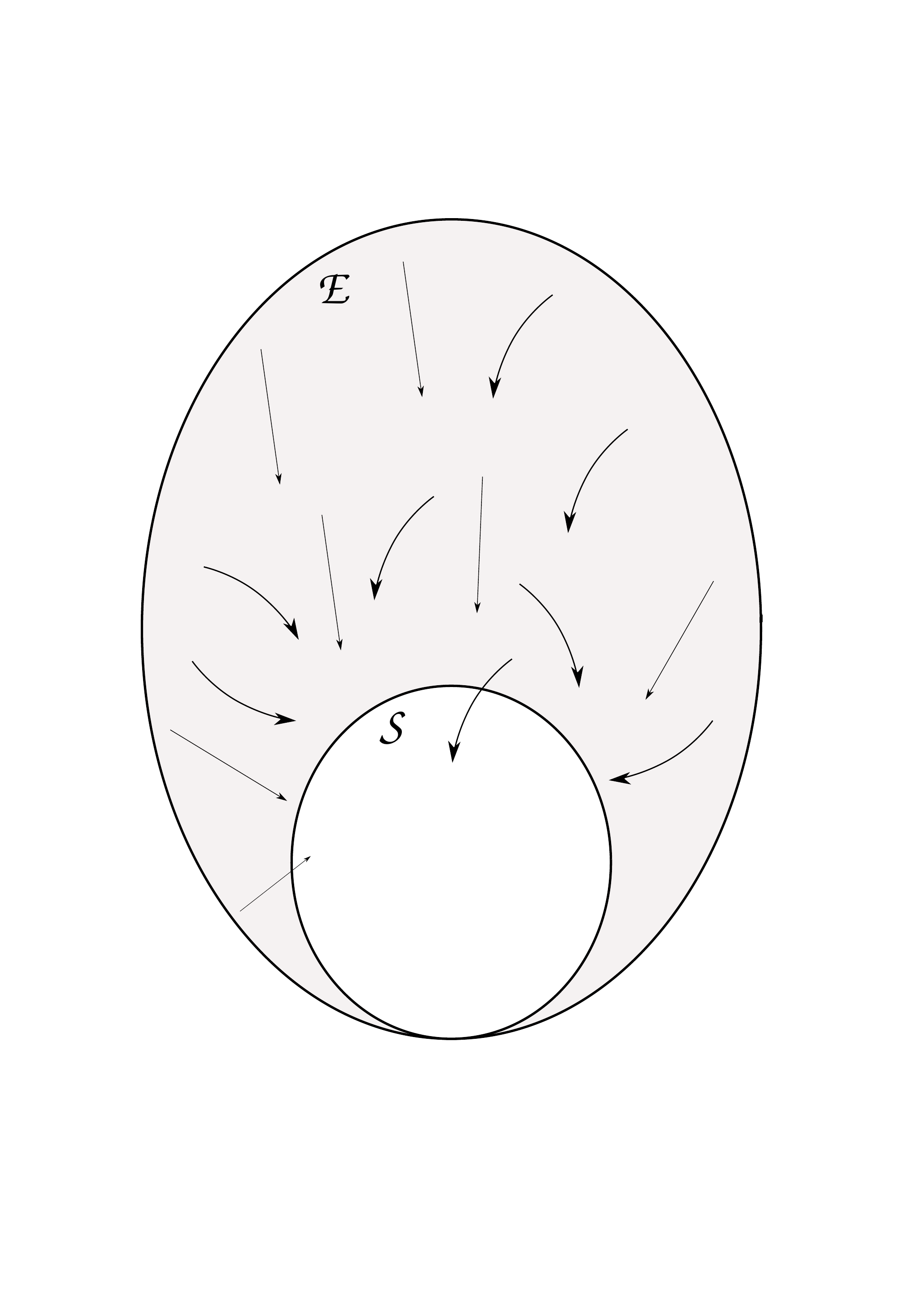}}
\qquad
\subfigure[]{\includegraphics[width=3.5cm]{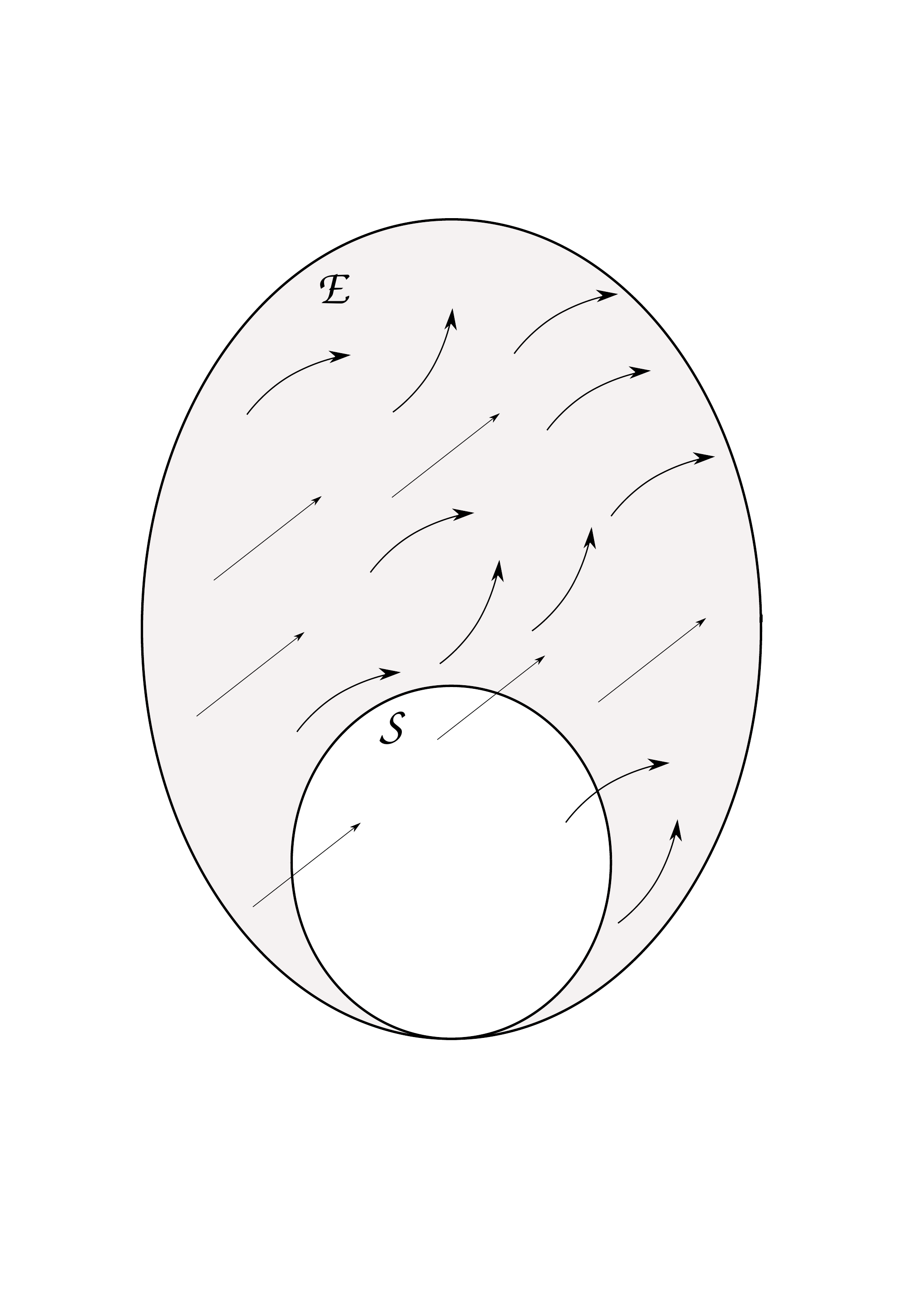}}
\caption{The arrows represent how initial states are mapped to time evolved 
ones. Figure (a) 
shows a flow directed towards a separable state, while the figure (b) shows 
the flow directed 
towards an entangled one. In fact, we should stress that it is not always true 
that 
the whole family keeps fixed some $\rho$, \emph{i.e.}, 
$\Lambda_t(\rho)=\rho, \forall t \geq 0$ for some state $\rho$.}
\label{fig2}
\end{figure}

\section*{Closed systems}

We start with the additional assumption that the bipartite system dynamics 
is induced by unitary
operations for all $t>0$ [there is some $U_{t}$ acting on $\mathcal{H}_{AB}$ 
such that
$\Lambda_{t}(\cdot)=U_{t}(\cdot)U_{t}^{*}$]. That is, the pair of systems 
may have any interaction
with each other and they can even interact with classical external sources 
(for instance, their
Hamiltonian may vary in time due to an external control of some of its 
parameters). 
Under such
conditions, FTD is a consequence of Proposition~\ref{fact} above and Theorem~\ref{thm3} (discussed 
in Section~\ref{tech}):
\begin{theorem}
 If a bipartite system have dynamics given by 
$\Lambda_{t}(\cdot)=U_{t}(\cdot)U_{t}^{*}$ for all
$t>0$, there is no FTD if, and only if, $\{\Lambda_t\}_{t \geq 0} \in 
\mathcal{F}_{\mathcal{H}_A,\mathcal{H}_B}$. 
\label{closedsystems}
\end{theorem}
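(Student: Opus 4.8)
The plan is to prove both implications, treating the ``if'' direction as a warm-up and concentrating the effort on the ``only if'' direction.

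For the ``if'' direction ($\{\Lambda_t\}\in\mathcal F_{\mathcal H_A,\mathcal H_B}\Rightarrow$ no FTD), I would use that the conjugations $U_{A,t}\otimes U_{B,t}$ preserve the separable/entangled dichotomy. Indeed, if the image $(U_{A,t}\otimes U_{B,t})\rho(U_{A,t}\otimes U_{B,t})^{*}$ admitted a separable decomposition $\sum_i p_i\,\sigma_i\otimes\tau_i$, then conjugating back by the (local, invertible) $U_{A,t}^{*}\otimes U_{B,t}^{*}$ would exhibit $\rho$ as separable. Hence an entangled $\rho$ stays entangled for every $t$, so there is no interval on which it becomes separable and FTD cannot occur.

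For the ``only if'' direction, I would fix $t>0$ and write $\Lambda_t(\cdot)=U_t(\cdot)U_t^{*}$. First I would record that $\Lambda_t$ is an affine homeomorphism of the state space that is an isometry for the Hilbert--Schmidt inner product $\langle X,Y\rangle=\mathrm{Tr}(X^{*}Y)$, hence volume-preserving on the hyperplane of unit-trace Hermitian operators, and that it carries the convex body $\mathcal S$ onto the convex body $\Lambda_t(\mathcal S)$. Next, the contrapositive of Proposition~\ref{fact}, applied at each time, says that no entangled state is sent into $\mathrm{int}(\mathcal S)$, i.e. $\Lambda_t^{-1}(\mathrm{int}\,\mathcal S)\subseteq\mathcal S$. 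Being open and contained in $\mathcal S$, this set lies in $\mathrm{int}\,\mathcal S$, and as the homeomorphic image of an interior it equals $\mathrm{int}\,\Lambda_t^{-1}(\mathcal S)$. The decisive step is then a rigidity fact: two convex bodies with $\Lambda_t^{-1}(\mathcal S)\subseteq\mathcal S$ and equal volume must coincide (a proper sub-body would leave an open cap of positive volume), so $\Lambda_t^{-1}(\mathcal S)=\mathcal S$ and therefore $\Lambda_t(\mathcal S)=\mathcal S$; thus $U_t$ maps separable states to separable states. At this point I would invoke the technical classification Theorem~\ref{thm3}, which forces $U_t=U_{A,t}\otimes U_{B,t}$ or, when $\dim\mathcal H_A=\dim\mathcal H_B$, $U_t=(U_{A,t}\otimes U_{B,t})\,\mathrm{SWAP}$. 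Finally I would exclude the swap branch by continuity: fixing a traceless $X\neq 0$ on $\mathcal H_A$ and tracking $\Lambda_t(X\otimes I)$, the quantity measuring its deviation from the form $(\cdot)_A\otimes I$ is continuous in $t$, vanishes exactly on the product branch and equals $\|X\|$ on the swap branch; since it vanishes at $t=0$ (where $\Lambda_0=I$) and takes only these two values, connectedness of $[0,\infty)$ forces the product branch for all $t$, and continuity of the family is inherited from that of $\Lambda_t$, giving $\{\Lambda_t\}\in\mathcal F_{\mathcal H_A,\mathcal H_B}$.

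The main obstacle I anticipate is the bridge from the hypothesis — which only forbids entangled states from reaching $\mathrm{int}(\mathcal S)$ — to the far stronger statement that $U_t$ exactly preserves separability, as required to apply Theorem~\ref{thm3}; this is precisely where the Hilbert--Schmidt isometry together with the convex-body volume rigidity does the real work, and where the emphasized fact that $\mathcal S$ has nonempty interior is essential. A secondary subtlety is that $U_t$ itself need not depend continuously on $t$ even though $\Lambda_t$ does, owing to the global phase ambiguity, so the exclusion of the swap branch must be phrased through $\Lambda_t$ rather than through $U_t$.
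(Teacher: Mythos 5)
Your proposal is correct, but it follows a genuinely different route from the paper's. The paper argues by contraposition in a single step: if some $\Lambda_{\bar t}$ is not a local-unitary conjugation, Corollary~\ref{corentan} (the classification of entanglement-preserving unitaries, itself a consequence of Theorem~\ref{thm3}) supplies an entangled pure state $\ket{\psi_E}$ whose image $\ket{\psi_P}=U_{\bar t}\ket{\psi_E}$ is a product vector; mixing in a small weight $\lambda$ of the maximally mixed state keeps $\rho_E=\lambda\tfrac{I}{d_Ad_B}+(1-\lambda)\borb{\psi_E}{\psi_E}$ entangled, while its image $\lambda\tfrac{I}{d_Ad_B}+(1-\lambda)\borb{\psi_P}{\psi_P}$ is a convex combination of an interior point of $\mathcal S$ with a point of $\mathcal S$, hence lies in $\mathrm{int}\,\mathcal S$, and Proposition~\ref{fact} yields FTD. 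You instead upgrade the no-FTD hypothesis to the exact invariance $\Lambda_t(\mathcal S)=\mathcal S$ via the Hilbert--Schmidt isometry plus volume rigidity of nested convex bodies, then apply Theorem~\ref{thm3} to $U_t$ and kill the SWAP branch by connectedness. The paper's route buys brevity and elementarity (pure convexity, no measure theory, and the FTD-witnessing state is exhibited explicitly); your route buys something the paper's write-up actually lacks: when $U_{\bar t}$ is a local unitary composed with SWAP it \emph{does} preserve entangled states, so the paper's claimed $\ket{\psi_E}$ does not exist in that case, and the printed proof of this theorem never excludes it (the needed continuity argument appears only later, in case (ii) of Theorem~\ref{opensystems}), whereas your final connectedness step handles exactly this. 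Two small points you should make explicit: the passage from $\Lambda_t(\mathcal S)=\mathcal S$ to ``$U_t$ preserves product vectors'' needs the one-line remark that unitary conjugation preserves purity and pure separable states are product; and the open-set inclusion $\Lambda_t^{-1}(\mathrm{int}\,\mathcal S)\subseteq\mathrm{int}\,\mathcal S$ should be promoted to $\Lambda_t^{-1}(\mathcal S)\subseteq\mathcal S$ (a convex body is the closure of its interior) before invoking rigidity. Note finally that your rigidity step hinges on volume preservation, hence on unitarity, so unlike the paper's mixing trick it offers no leverage toward the general open-system conjecture, where CPTP maps contract.
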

\begin{proof}
Indeed, if the 
family $\Lambda_{t}$ is such
that, for some $\bar{t}>0$, $U_{\bar{t}}$ is not a local unitary operation,  
there exists an entangled state $\ket{\psi_{E}}$ 
such that
$\ket{\psi_{P}}=U_{\bar{t}}\ket{\psi_{E}}$ is a product state 
(see Corollary~\ref{corentan}). Take small enough $0<\lambda<1$ such
that $\rho_{E}=\lambda 
\frac{I}{d_{A}d_{B}}+(1-\lambda)\borb{\psi_{E}}{\psi_{E}}$ is still an
entangled state. We then have that $\Lambda(\rho_{E})=\lambda
\frac{I}{d_{A}d_{B}}+(1-\lambda)\borb{\psi_{P}}{\psi_{P}}$ is a state 
\emph{in the
interior} of the set of separable states (a convex combination of an 
arbitrary point of a convex
set with a point in the interior of it, results in an element also in its 
interior~\cite{rock}). By 
Proposition~\ref{fact}, FTD takes place.
\end{proof}

\section*{Pair of qubits}

Physically, although Theorem~\ref{closedsystems} allows for very general 
interactions between the
systems, it is restrictive with respect to their interaction with their 
environment, since this
environment must be
effectively classic. Here we greatly relax this restriction, on the expense 
of diminishing the
range of quantum systems considered.

\begin{theorem}
 If a bipartite system with Hilbert space $\mathcal{H}_{AB}$, where
\textnormal{\normalfont{dim}}$(\mathcal{H}_{A})=$\textnormal{\normalfont{dim
}}$(\mathcal{ H } _{B}
)=2$, have a dynamics
such that $\Lambda_{t}(\mathds{1})=\mathds{1}$ for all $t \geq 0$ 
(\emph{i.e.} each map is unital), there is no FTD if, and only if, 
$\{\Lambda_t\}_{t \geq 0} \in 
\mathcal{F}_{\mathcal{H}_A,\mathcal{H}_B}$. \label{opensystems}
\end{theorem}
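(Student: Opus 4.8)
The proof runs parallel to Theorem~\ref{closedsystems}, the new difficulty being that a unital channel need not come from a unitary. The ``if'' direction is immediate: if each $\Lambda_t(\cdot)=(U_{A,t}\otimes U_{B,t})(\cdot)(U_{A,t}\otimes U_{B,t})^{*}$ is a local unitary conjugation it maps separable states to separable states and entangled states to entangled states, so an initially entangled state can never become separable and FTD is impossible. For the ``only if'' direction I argue by contraposition: assuming $\{\Lambda_t\}\notin\mathcal{F}_{\mathcal{H}_A,\mathcal{H}_B}$, I produce FTD.

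The key reduction uses unitality. Since $\Lambda_t(\mathds{1})=\mathds{1}$ we have $\Lambda_t(\mathds{1}/4)=\mathds{1}/4$, so the mixing trick of Theorem~\ref{closedsystems} applies verbatim: if for some $\bar t>0$ there is a pure entangled state $\ket{\psi_E}$ with $\Lambda_{\bar t}\borb{\psi_E}{\psi_E}$ separable, then for small $0<\lambda<1$ the state $\rho_E=\lambda\mathds{1}/4+(1-\lambda)\borb{\psi_E}{\psi_E}$ is still entangled, while $\Lambda_{\bar t}(\rho_E)=\lambda\mathds{1}/4+(1-\lambda)\Lambda_{\bar t}\borb{\psi_E}{\psi_E}$ is a convex combination of the interior point $\mathds{1}/4$ with a separable state, hence interior to $\mathcal{S}$; Proposition~\ref{fact} then yields FTD. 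Contrapositively, the absence of FTD forces every $\Lambda_t$ to send each pure entangled state again to an entangled state; I call this property $(\mathrm{P})$.

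It then remains to classify the unital two-qubit channels enjoying $(\mathrm{P})$, and the plan is to show these are exactly the product-preserving-unitary conjugations. First I would dispose of the purity-preserving case: a channel sending every pure state to a pure state is a unitary conjugation $U_{\bar t}(\cdot)U_{\bar t}^{*}$ (one of the ``general results about maps preserving pure states,'' via a single-Kraus-operator argument), and then $(\mathrm{P})$ rules out $U_{\bar t}$ mapping any pure entangled state to a product state, so by Corollary~\ref{corentan} the unitary $U_{\bar t}$ must preserve product states, whence Theorem~\ref{thm3} makes it local or a local unitary composed with the swap. The main obstacle is the complementary, genuinely mixing (non-unitary) case: here I would use that, for two qubits, separability is equivalent to positivity of the partial transpose, detected by $\det[(\Lambda_{\bar t}\borb{\psi}{\psi})^{T_B}]\ge 0$. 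If the loss of purity already places some pure-state image in the \emph{interior} of $\mathcal{S}$, then entangled pure states arbitrarily close to it are disentangled by continuity, contradicting $(\mathrm{P})$; the delicate residual situation is when purity is lost only ``towards the boundary'' of $\mathcal{S}$, and there I would generalize the explicit phase-damping computation, in which the pure-state concurrence $2|ad-bc|$ is contracted to $2\bigl||ad|-|bc|\bigr|$ and so vanishes for a suitable relative phase, to an arbitrary unital channel, exhibiting an entangled input with PPT image. I expect essentially all of the work to lie in this last step.

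Finally, having shown that $(\mathrm{P})$ holds for every $t$ and singles out the conjugations in $\{\text{local}\}\cup\{\text{local}\circ\text{swap}\}$, I would close with a connectedness argument: this set has two components, the local ones (modulo phases $PU(2)\times PU(2)$) and the swap coset, which are topologically separated inside the channels satisfying $(\mathrm{P})$. Since $t\mapsto\Lambda_t$ is continuous with $\Lambda_0=I$ in the local component, it cannot reach the swap component without passing through a channel violating $(\mathrm{P})$, i.e.\ one exhibiting FTD. Hence every $\Lambda_t$ is a genuine local unitary conjugation, that is $\{\Lambda_t\}\in\mathcal{F}_{\mathcal{H}_A,\mathcal{H}_B}$, which is the contrapositive we sought.
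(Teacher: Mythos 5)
Your ``if'' direction, the unitality/mixing reduction via Proposition~\ref{fact}, the purity-preserving case, and the closing connectedness argument excluding the SWAP coset all run parallel to the paper's proof (its cases $i$, $ii$, $iii$). The fatal problem is the case you yourself flag as the main obstacle: $\Lambda_{\bar t}$ not induced by any unitary. Your plan is to show that every such unital channel violates property $(\mathrm{P})$, i.e.\ maps some \emph{pure} entangled state to a separable state, by ``generalizing the phase-damping computation''. That step is not merely missing --- the claim is false. Take $\Lambda=\Phi_p\otimes I$ with $\Phi_p(\rho)=(1-p)\rho+pZ\rho Z$ and $0<p<\tfrac12$: it is unital and has Kraus rank two, hence not unitary-induced, yet it satisfies $(\mathrm{P})$. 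Indeed, write an entangled pure state as $\ket{\psi}=\ket{0}\ket{v_0}+\ket{1}\ket{v_1}$ with $\ket{v_0},\ket{v_1}$ linearly independent; the image $(1-p)\borb{\psi}{\psi}+p(Z\otimes I)\borb{\psi}{\psi}(Z\otimes I)$ is supported on $\mathrm{span}\{\ket{0}\ket{v_0},\ket{1}\ket{v_1}\}$, whose only product vectors are $\ket{0}\ket{v_0}$ and $\ket{1}\ket{v_1}$ themselves, and it retains the off-diagonal coherence $(1-2p)\borb{0,v_0}{1,v_1}\neq0$, so by the range criterion it cannot be a mixture of those two product states and is therefore entangled. (This also shows your recalled formula is wrong: the dephased image of $\tfrac{1}{\sqrt2}(\ket{0}\ket{+}+\ket{1}\ket{-})$ has concurrence $|1-2p|>0$, not $2\bigl||ad|-|bc|\bigr|=0$.) So the classification ``unital $+$ $(\mathrm{P})$ $\Rightarrow$ product-preserving unitary conjugation'', on which your whole contrapositive rests, is untrue, and no amount of work on the deferred computation can repair it: the reduction to pure-state inputs is itself the error.

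The paper closes this case by producing a \emph{mixed} FTD witness. Its key lemma is Theorem~\ref{thm2}: a unital two-qubit channel preserving the purity of all \emph{maximally entangled} states is unitary-induced. Hence in your hard case some maximally entangled $\rho_E$ has a mixed image, and one invokes the spectral fact that the smallest eigenvalue of a two-qubit partial transpose obeys $\lambda_-\geq-\tfrac12$, with equality only for maximally entangled pure states; thus $\lambda_-(\rho_E)=-\tfrac12$ while $\lambda_-[\Lambda_{\bar t}(\rho_E)]=\delta>-\tfrac12$. Choosing $p$ with $-\tfrac34p+\tfrac14<0<p(\delta-\tfrac14)+\tfrac14$ (possible precisely because $\delta>-\tfrac12$), the input $p\rho_E+(1-p)\tfrac{\mathds{1}}{4}$ is NPT, hence entangled, while its image $p\Lambda_{\bar t}(\rho_E)+(1-p)\tfrac{\mathds{1}}{4}$ is strictly PPT, hence (PPT $=$ separable for two qubits) in the interior of $\mathcal{S}$; Proposition~\ref{fact} then gives FTD, unitality being what guarantees $\mathds{1}/4$ is held fixed. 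Applied to the dephasing counterexample above, this recipe produces a Werner state that disentangles in finite time even though no pure state does --- exactly the witness your pure-state strategy can never see.
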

\begin{proof}
For an 
arbitrary instant of
time $t$, we have the following four possibilities for the corresponding 
CPTP map $ \Lambda_{t}$:
$i)$ it is induced  by a local unitary operation; $ii)$ it is induced by a 
composition of a local
unitary operation with the SWAP operator; $iii)$ it is induced by a unitary 
operation which is
neither local nor the composition of a local unitary with the SWAP operator; 
$iv)$ it is not induced
by any unitary. Let us look to each situation:

$i)$ Of course, if this holds for all $t>0$, we do not have FTD. 

$iii)$ Here we can just apply 
Theorem~\ref{closedsystems} to show that there is FTD. 

$iv)$ We can find a maximally entangled state $\rho_{E}$ 
such that 
$\Lambda_{\bar{t}}(\rho_{E})$ is mixed (see Theorem~\ref{thm2}). If 
$\lambda_{-}(\rho)$ is the smallest eigenvalue of the 
partial transposition of $\rho$, we have that 
$\lambda_{-}(\rho_{E})=-\frac{1}{2}$ and 
$\lambda_{-}[\Lambda(\rho_{E})]=\delta>-\frac{1}{2}$ (see Ref.~\cite{Witnessed}). We can choose $0<p<1$ 
such that 
$\lambda_{-}[p\rho_{E}+(1-p)\frac{\mathds{1}}{4}]=p(-\frac{1}{2}-\frac{1}{4}
)+\frac{1}{4}<0$ and 
$\lambda_{-}[p\Lambda(\rho_{E})+(1-p)\frac{\mathds{1}}{4}]=p(\delta-\frac{1}
{4})+\frac{1}{4}>0$. 
That is, the initial state $p\rho_{E}+(1-p)\frac{\mathds{1}}{4}$ is 
entangled but its time evolved 
state at $\bar{t}$, $p\Lambda(\rho_{E})+(1-p)\frac{\mathds{1}}{4}$ is in the 
interior of the set of  
separable states. By Proposition~\ref{fact}, we have FTD. 

$ii)$ Finally, if this is the case, the continuity of the family of
maps allows us to conclude for the existence of a $0<\bar{t}<t$ where 
$\Lambda_{\bar{t}}$
fits in either cases $iii)$ or $iv)$, since the set of CPTP maps induced by 
such unitaries is disjoint from the set induced by local unitaries (a continuous path between
two disjoint sets must necessarily pass trough the complement of them).

\end{proof}

\section*{Example: Markovian dynamics}

A Markovian dynamics~\cite{plenioetal} is distinguished by a semi-group 
property satisfied by the family of CPTP maps:
\begin{equation}\Lambda_{t+t'}=\Lambda_{t}\circ\Lambda_{t'},\end{equation}
for all $t,t'\geq 0$. It holds then~\cite{lindblad} that the dynamics can be 
equivalently described 
by a differential equation (a Lindblad equation):
\begin{equation} 
 \frac{d\rho(t)}{dt}=-i[H,\rho]+\sum_{i=1}^{N}\left( A_{i}\rho 
A_{i}^{*}-\frac{1}{2}\{A^{*}_{i}A_{i},\rho\} \right),\label{markov}
\end{equation}
where $H$ is self-adjoint while $A_{i}$ are linear operators. 
 Lindbladian equations
can describe a plethora of physical 
phenomena, such as the dissipation of electromagnetic field modes of a cavity, spontaneous emission of atoms, 
spin dephasing due to a random magnetic field and so on. 
Therefore, despite the fact that 
the semi-group condition is somewhat restrictive, it is satisfied by many 
relevant quantum systems.  
The first term in the r.h.s. generates a unitary evolution and can usually be interpreted as the Hamiltonian evolution of the \textit{isolated} system.
The term involving the operators $A_i$ is usually called \textit{dissipator}, being responsible for the contractive part of the dynamics.

When an operator $A_{i}$ is proportional to the 
identity it does 
not contribute to the 
dynamics. Moreover, the dynamics will preserve the purity of initial states 
if, and only if, all 
operators $A_{i}$ are of such kind (that is, the dynamics is Hamiltonian):
\begin{lemma}
For $\rho(t)$, a solution of Eq.~\eqref{markov} with initial condition 
$\borb{\psi}{\psi}$, it holds 
that \text{\normalfont{lim}}$_{t\rightarrow 
0}\frac{d\text{\normalfont{Tr}}[\rho^2(t)]}{dt}=0$ for 
all $\ket{\psi}$ if, and only if, $A_{i}=\lambda_{i}I$ for 
$i=1,...,N$.\label{lemaderivada}
\end{lemma}
\begin{proof}
Indeed, for $t>0$
\begin{align*}
&\frac{d\text{\normalfont{Tr}}[\rho^2]}{dt}=2\text{\normalfont{Tr}}[\frac{
d\rho}{dt}\rho]\\
&=2\text{\normalfont{Tr}}(-i[H,\rho]\rho+\sum_{i=1}^{N}A_{i}\rho 
A_{i}^{*}\rho-\frac{1}{2}\{A^{*}_{i}A_{i},\rho\}\rho)
\end{align*}
Since $\text{\normalfont{lim}}_{t\rightarrow 0}\rho=\borb{\psi}{\psi}$, it 
follows that:
\begin{align}
\text{\normalfont{lim}}_{t\rightarrow 
0}\frac{d\text{\normalfont{Tr}}[\rho^2(t)]}{dt}=2\sum_{i=1}^{N}(|\braket{
\psi|A_{i}|\psi}|^{2}
-||A_{i}\ket { \psi} ||^{2}).\label{equa}
\end{align}
By the Cauchy-Schwarz inequality, $$|\braket{\psi|A_{i}|\psi}|^{2}\leq 
||\ket{\psi}||^{2}||A_{i}\ket{\psi}||^{2}=||A_{i}\ket{\psi}||^{2},$$
we can conclude the r.h.s of eq.~\eqref{equa} is zero iff all terms in the 
sum are 
zero and 
$\ket{\psi}\propto A_{i}\ket{\psi}$ for every $i=1,...,N$. These 
proportionality relations 
holds for all $\ket{\psi}$ if, and only if, all $A_{i}$ are proportional to 
the 
identity operator.   
\end{proof}

The above lemma shows that, for every $t>0$, the CPTP map defined 
by Eq.~\eqref{markov} is not 
induced by a unitary operation. It is also easy to check that every CPTP 
maps given by Eq.~\eqref{markov}  
is unital as long as 
$\sum_{i=1}^{N}(A_{i}A_{i}^{*}-A_{i}^{*}A_{i})=0.$ With this in hand, by 
Theorem~\ref{opensystems}, we can state:
\begin{corollary}
 If a bipartite system with Hilbert space $\mathcal{H}_{AB}$, where
\normalfont{dim}$(\mathcal{H}_{A})=\text{\normalfont{dim}}(\mathcal{H}_{B}
)=2$, have a dynamics
described by eq.~\eqref{markov}, where some 
$A_{i}$ is not a multiple of the identity and 
$\sum_{i=1}^{N}(A_{i}A_{i}^{*}-A_{i}^{*}A_{i})=0$,
there is FTD.
\end{corollary}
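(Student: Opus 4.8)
The plan is to assemble the corollary from the two Lindbladian facts that the surrounding discussion has already isolated together with Theorem~\ref{opensystems}, so that the claim reduces to a short verification rather than new analysis.

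First I would use the hypothesis that some $A_i$ is not a multiple of the identity to rule out unitarity, via Lemma~\ref{lemaderivada}. That Lemma states $\text{lim}_{t\rightarrow 0}\frac{d}{dt}\text{Tr}[\rho^2(t)]=0$ for all pure initial states if and only if every $A_i=\lambda_i I$; since this fails by assumption, there is a pure state $\ket{\psi}$ whose purity strictly changes at $t=0$, and as $\text{Tr}[\rho^2]$ can only be non-increasing near a pure state under a CPTP flow, $\rho(t)$ becomes mixed for small $t>0$. Because unitary conjugation preserves $\text{Tr}[\rho^2]$, no $\Lambda_t$ with $t$ in some interval $(0,\varepsilon)$ can be of the form $U_t(\cdot)U_t^{*}$; in particular such a $\Lambda_t$ is not a \emph{local} unitary, and hence $\{\Lambda_t\}_{t\geq 0}\notin\mathcal{F}_{\mathcal{H}_A,\mathcal{H}_B}$ (membership in $\mathcal{F}_{\mathcal{H}_A,\mathcal{H}_B}$ would require \emph{every} map to be a product of local unitaries).

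Second I would verify unitality for every $t\geq 0$. Applying the generator in Eq.~\eqref{markov} to $\mathds{1}$ gives $-i[H,\mathds{1}]+\sum_{i}\left(A_i\mathds{1}A_i^{*}-\frac{1}{2}\{A_i^{*}A_i,\mathds{1}\}\right)=\sum_i\left(A_iA_i^{*}-A_i^{*}A_i\right)$, which vanishes by hypothesis. Thus $\mathds{1}$ is a stationary point of the Lindblad flow, and by the semigroup property $\Lambda_t(\mathds{1})=\mathds{1}$ for all $t$, so each map is unital; continuity of $\{\Lambda_t\}_{t\geq 0}$ is automatic since in finite dimension the generator is bounded.

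Finally I would invoke Theorem~\ref{opensystems}: we are dealing with a pair of qubits and a family that is unital for all $t\geq 0$, so the theorem guarantees there is no FTD precisely when $\{\Lambda_t\}_{t\geq 0}\in\mathcal{F}_{\mathcal{H}_A,\mathcal{H}_B}$. The first step showed the family is \emph{not} in $\mathcal{F}_{\mathcal{H}_A,\mathcal{H}_B}$, so FTD must occur, which proves the corollary. There is no genuine obstacle here once the two Lindbladian facts are in place; the only point demanding care is that unitality must hold for all finite $t$ rather than merely infinitesimally, and this is exactly what the semigroup structure supplies by promoting $\mathds{1}$ from an infinitesimal fixed point to a fixed point of every $\Lambda_t$.
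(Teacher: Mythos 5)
Your proposal is correct and follows essentially the same route as the paper: use Lemma~\ref{lemaderivada} to conclude that some pure state loses purity, so the maps near $t=0$ cannot be (local) unitary conjugations and the family falls outside $\mathcal{F}_{\mathcal{H}_A,\mathcal{H}_B}$; check that the condition $\sum_{i}(A_iA_i^{*}-A_i^{*}A_i)=0$ makes $\mathds{1}$ a fixed point, hence every $\Lambda_t$ unital; and then apply Theorem~\ref{opensystems}. You merely spell out two steps the paper leaves implicit (the sign of the purity derivative via Cauchy--Schwarz, and the explicit computation that the generator annihilates $\mathds{1}$), which is a welcome but not substantively different elaboration.
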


\section{Unital pure state preserving maps and product preserving unitaries}\label{tech}

In this section we prove some results about CPTP maps,
such as the
characterization of unital and  pure state preserving ones, which were used 
in the Section~\ref{ftd}.

Consider a bipartite quantum system with finite dimensional Hilbert space
$\mathcal{H}$. We say that a
CPTP map $\Lambda$, acting on the set of all density operators $\mathcal{D}(\mathcal{H})$, is
\emph{pure state preserving} if $\Lambda (\borb{\psi}{\psi})$ is a
pure state for
every pure state $\ket{\psi}$. 
Trivial examples of such maps are those
induced by 
unitary operations [$\Lambda (\rho)=U\rho U^{\dagger}$, for $U$ unitary acting on $\mathcal{H}$] and
the constant maps $\Lambda
(\rho)=\borb{\phi_{0}}{\phi_{0}}$ where $\ket{\phi_{0}}$ is a fixed
state. Moreover a CPTP map is said to be unital if it maps the maximally mixed
state on itself. 

\begin{theorem} 
 Every pure state preserving unital map $\Lambda:\mathcal{D}(\mathcal{H})\rightarrow
\mathcal{D}(\mathcal{H})$, where \normalfont{dim}$(\mathcal{H})=d<\infty$, is induced by a 
unitary
operation.
\end{theorem}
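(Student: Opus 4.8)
The plan is to pass to a Kraus/Stinespring picture and use pure-state preservation to collapse all Kraus operators onto a single unitary, invoking unitality only to discard the degenerate ``constant map'' alternative. First I would fix a Kraus decomposition $\Lambda(\rho)=\sum_k K_k\rho K_k^{\dagger}$ with $\sum_k K_k^{\dagger}K_k=I$ and form the Stinespring isometry $V\ket{\psi}=\sum_k (K_k\ket{\psi})\otimes\ket{k}$ from $\mathcal{H}$ into $\mathcal{H}\otimes\mathcal{H}_{\mathrm{env}}$. Pure-state preservation says that $\Lambda(\borb{\psi}{\psi})=\sum_k\borb{K_k\psi}{K_k\psi}$ has rank one for every unit vector $\ket{\psi}$; being a sum of rank-one operators, this forces all the vectors $K_k\ket{\psi}$ to be parallel. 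Hence for each $\ket{\psi}$ one may write $V\ket{\psi}=\ket{\chi(\psi)}\otimes\ket{c(\psi)}$, a \emph{product} vector, where $\ket{\chi(\psi)}\in\mathcal{H}$ is the normalized image (so $\Lambda(\borb{\psi}{\psi})=\borb{\chi(\psi)}{\chi(\psi)}$) and $\ket{c(\psi)}\in\mathcal{H}_{\mathrm{env}}$ records the coefficients.

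Next I would bring in unitality. Fixing an orthonormal basis $\{\ket{e_i}\}_{i=1}^{d}$, unitality gives $\sum_i\borb{\chi(e_i)}{\chi(e_i)}=\sum_i\Lambda(\borb{e_i}{e_i})=\Lambda(I)=I$. Sandwiching this identity between $\bra{\chi(e_j)}$ and $\ket{\chi(e_j)}$ yields $\sum_i|\braket{\chi(e_j)|\chi(e_i)}|^2=1$, in which the $i=j$ term already equals $1$, so $\{\ket{\chi(e_i)}\}$ is itself an orthonormal basis and, in particular, the $\ket{\chi(e_i)}$ are pairwise non-parallel. Now I would exploit the product structure once more: for $i\neq j$ the vector $V(\ket{e_i}+\ket{e_j})=\ket{\chi(e_i)}\otimes\ket{c(e_i)}+\ket{\chi(e_j)}\otimes\ket{c(e_j)}$ must again be rank one, and since $\ket{\chi(e_i)}$ and $\ket{\chi(e_j)}$ are linearly independent this compels $\ket{c(e_i)}$ and $\ket{c(e_j)}$ to be parallel. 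Thus all the environment factors agree up to phase, and absorbing those phases into the $\ket{\chi(e_i)}$ produces a single unit vector $\ket{\eta}$ with $V\ket{e_i}=\ket{\chi(e_i)}\otimes\ket{\eta}$ for every $i$.

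By linearity this gives $V\ket{\psi}=(W\ket{\psi})\otimes\ket{\eta}$ for all $\ket{\psi}$, where $W\ket{e_i}=\ket{\chi(e_i)}$; since $\{\ket{\chi(e_i)}\}$ is orthonormal, $W$ is unitary. Reading off the $\ket{k}$-components gives $K_k=\eta_k W$ with $\eta_k=\braket{k|\eta}$, whence $\Lambda(\rho)=\sum_k|\eta_k|^2\,W\rho W^{\dagger}=W\rho W^{\dagger}$, which is the claimed form. The argument makes no use of $d$ beyond finiteness, so there is no special low-dimensional case to treat separately.

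The step I expect to be the main obstacle is the product-vector dichotomy: a linear map sending every vector to a product vector can collapse in two ways, either with the environment factor constant (the unitary case) or with the system factor $\ket{\chi(\psi)}$ constant, which is precisely the constant map $\rho\mapsto\borb{\phi_0}{\phi_0}$ that is pure-state preserving yet manifestly not unitary. The delicate point is therefore to establish, \emph{before} invoking parallelism, that the images $\ket{\chi(e_i)}$ are genuinely non-parallel; the orthonormal-basis computation above is exactly what secures the linear independence needed, and this is where unitality is indispensable, ruling out the collapse onto a single output ray.
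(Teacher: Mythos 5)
Your proof is correct and follows essentially the same route as the paper's: dilate $\Lambda$ (your Kraus/Stinespring isometry is just the paper's unitary dilation $U(\cdot\otimes\ket{R})$ in different packaging), use pure-state preservation to force product images, use unitality to make the images $\ket{\chi(e_i)}$ of an orthonormal basis orthonormal, then apply the product condition to superpositions $\ket{e_i}+\ket{e_j}$ to collapse the environment factors onto a single ray and read off the unitary. The only differences are cosmetic: your explicit sandwich computation for orthogonality and the final identification $K_k=\eta_k W$ spell out steps the paper states more briefly.
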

\begin{proof}
Take a Naimark dilation of $\Lambda$, that is, a unitary $U$ acting on a larger space 
$\mathcal{H}\otimes\mathcal{R}$ and a fixed vector $\ket{R}\in\mathcal{R}$, such that 
$\Lambda(\rho)=\text{Tr}_{\mathcal{R}}[U(\rho\otimes\borb{R}{R})U^{*}]$ for all
$\rho\in\mathcal{D}(\mathcal{H})$. 

It must be the case that $U\ket{\phi}\otimes
\ket{R}$ is a product vector for all $\ket{\phi}\in\mathcal{H}$, since otherwise
Tr$_{\mathcal{R}}[U(\borb{\phi}{\phi}\otimes \borb{R}{R})U^{*}]$ would not be a one-dimensional
projector and $\Lambda$ would not preserve pure states.

Now, if $\{\ket{\phi_{j}}\}_{j=1}^{d}$ is an orthonormal basis, we have that 
$\Lambda(\borb{\phi_{j}}{\phi_{j}})=P_{j}$ for some one-dimensional projectors  $P_{j}$.
From $\Lambda$ being 
unital, it holds that
$\Lambda(\sum_{j=1}^{d}\borb{\phi_{j}}{\phi_{j}})=\sum_{j=1}^{d}P_{j}=I$, so
the projectors $P_{j}$ must be mutually orthogonal.

With the last two paragraphs in mind it must be true that, for $j=1,...,d$, there are normalized
vectors $\ket{\psi_{j}}\in\mathcal{H}$ and $\ket{R_{j}}\in\mathcal{R}$, such that
$U\ket{\phi_{j}}\otimes\ket{R}=\ket{\psi_{j}}\otimes\ket{R_{j}}$. Moreover, the set
$\{\ket{\psi_{j}}\}_{j=1}^{d}$ must be orthonormal. On the other hand, for $j=2,...,d$,
$$U(\ket{\phi_{1}}+\ket{\phi_{j}})\otimes\ket{R}=\ket{\psi_{1}}\otimes\ket{R_{1}}+\ket{\psi_{j}}
\otimes \ket{R_{j}}.$$
For the vectors on the r.h.s of this equation being product, given that $\ket{\phi_{1}}$ is
orthogonal to $\ket{\phi_{j}}$, it must hold that $\ket{R_{j}}=z_{j}\ket{R_{1}}$ for some
$z_{j}\in\mathds{C}$ of unity modulus. If we define a unitary $V$ acting on $\mathcal{H}$ by
$V\ket{\phi_{j}}=z_{j}\ket{\psi_{j}}$ for $j=1,...,d$, we get $\Lambda(\rho)=V\rho V^{*}$ for all
density operators $\rho$.
\end{proof}

\begin{lemma}
 Let $\mathcal{H}_A,\mathcal{H}_B$ be two bi-dimensional Hilbert spaces. If 
$\ket{\phi}, \ket{\psi} \in \mathcal{H}_A \otimes 
\mathcal{H}_B$, and 
$\ket{\phi}+e^{i\theta}\ket{\psi}$ is a product vector for all $\theta \in \mathds{R}$, then 
$\ket{\phi}$ and $\ket{\psi}$ are product too. \label{emC2vale}
\end{lemma}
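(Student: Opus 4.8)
The plan is to exploit the standard identification of $\mathcal{H}_A \otimes \mathcal{H}_B \cong \mathds{C}^2 \otimes \mathds{C}^2$ with the space $M_2(\mathds{C})$ of $2\times 2$ complex matrices. Writing $\ket{\phi}=\sum_{i,j}\Phi_{ij}\,\ket{i}\otimes\ket{j}$ and likewise associating a matrix $\Psi$ to $\ket{\psi}$, a vector is a \emph{product} vector precisely when its associated matrix has rank at most one, which for a $2\times 2$ matrix is equivalent to the vanishing of its determinant. (This is exactly where the bi-dimensionality is used: in higher dimensions a vanishing determinant only forces rank at most $d-1$, not rank one.) Under this dictionary the hypothesis becomes the single scalar condition
\[
\det\!\left(\Phi + e^{i\theta}\Psi\right) = 0 \qquad \text{for all } \theta \in \mathds{R},
\]
and the goal reduces to deducing $\det\Phi = 0$ and $\det\Psi = 0$.

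First I would expand the determinant as a polynomial in the complex variable $z = e^{i\theta}$. Since the determinant of a $2\times 2$ matrix is a quadratic form in its entries, a direct computation gives
\[
\det(\Phi + z\Psi) = \det\Phi + z\,\mu(\Phi,\Psi) + z^2\det\Psi,
\]
where $\mu(\Phi,\Psi)$ is the bilinear ``mixed'' term obtained by polarizing the determinant. Hence the left-hand side is a polynomial in $z$ of degree at most two whose constant and leading coefficients are exactly the two determinants we wish to annihilate.

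The key step is then a matter of elementary algebra. As $\theta$ ranges over $\mathds{R}$, the point $z = e^{i\theta}$ traverses the whole unit circle, an infinite set; and a polynomial of degree at most two that vanishes at infinitely many points (indeed, three distinct values already suffice, by a Vandermonde argument) must be identically zero. Therefore all three coefficients vanish. In particular $\det\Phi = 0$ and $\det\Psi = 0$, which, translated back through the matrix correspondence, says precisely that $\ket{\phi}$ and $\ket{\psi}$ are product vectors.

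I do not expect a genuine obstacle here; the only points needing care are the two translations between pictures: that ``product'' is equivalent to ``determinant zero'' under the rank-one characterization, and that the hypothesis being quantified over \emph{all} phases $\theta$ is exactly what upgrades a circle of equalities into an identity of polynomial coefficients. The degenerate case in which $\ket{\phi}$ or $\ket{\psi}$ vanishes is harmless, as the zero vector corresponds to the zero matrix, of rank zero.
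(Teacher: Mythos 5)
Your proof is correct and is essentially the paper's own argument: the paper likewise identifies each vector with its $2\times 2$ coefficient matrix, notes that being a product vector is equivalent to the vanishing of the corresponding determinant, and observes that $\det(\Phi+z\Psi)$ is a polynomial of degree at most two in $z$ which cannot vanish on the whole unit circle unless it vanishes identically. The only cosmetic difference is that the paper first puts $\ket{\psi}$ in Schmidt form and argues by contradiction on the leading coefficient $ab$ (then handles $\ket{\phi}$ ``by similar reasoning''), whereas you read off both conclusions at once from the constant and leading coefficients $\det\Phi$ and $\det\Psi$ --- a slightly more streamlined, basis-free packaging of the same idea that also makes explicit the root-counting step the paper leaves implicit.
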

\begin{proof}
Let be $\ket{\psi}=a\ket{00}+b\ket{11}$ a Schmidt decomposition for $\ket{\psi}$, and 
$\ket{\phi}=\alpha\ket{00}+\beta\ket{01}+\gamma\ket{10}+\delta\ket{11}$ the expression for 
$\ket{\phi}$ with respect to the basis $\{\ket{00},\ket{01},\ket{10}, \ket{11}\}$. For arbitrary $z
\in 
\mathds{C}$, we can define the family of vectors:
\begin{subequations}
 \begin{eqnarray}
\nonumber
& \ket{z}= \ket{\phi}+z\ket{\psi} = (az+\alpha)\ket{00} + (bz+\delta)\ket{11}+ \\ \nonumber
& +\beta\ket{01}+\gamma\ket{10}.
 \end{eqnarray}
\end{subequations}
For each $z$, the above state factorizes if, 
and only if, the following determinant is zero:
$$
D=
\left| \begin{array}{cc}
(az+\alpha) & \beta  \\ 
 \gamma     & (bz+\delta) 
 \end{array} \right| = abz^2 + (a\delta+b\alpha) z + \alpha\delta + \beta\gamma.
$$
If $a,b \neq 0$ ($\emph{i.e.}$, $\ket{\psi}$ is entangled), $D$ can not be identically zero for all
values of $z$. Therefore,
 $\ket{\psi}$ must be product. By similar reasoning, we conclude $\ket{\phi}$ is also product. 
\end{proof}

\begin{lemma}
 Let $\mathcal{H}_A,\mathcal{H}_B$ be two Hilbert spaces with dimension $d 
\geq 2$. If $\ket{\phi}, \ket{\psi} \in \mathcal{H}_A \otimes 
\mathcal{H}_B$, and 
$\ket{\phi}+e^{i\theta}\ket{\psi}$ is a product state for all $\theta \in \mathds{R}$, then 
$\ket{\phi}$ and $\ket{\psi}$ are product too. \label{emCdvale}
\end{lemma}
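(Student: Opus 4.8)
The plan is to pass to the standard matrix representation of bipartite vectors, under which being a product vector is equivalent to being a rank-one (or zero) matrix. Fixing orthonormal bases $\{\ket{i}\}$ of $\mathcal{H}_A$ and $\{\ket{j}\}$ of $\mathcal{H}_B$, I associate to each vector $\sum_{ij}M_{ij}\,\ket{i}\otimes\ket{j}$ its coefficient matrix $M$. A vector is product if and only if its matrix has rank at most one, and this in turn holds if and only if every $2\times 2$ minor of $M$ vanishes. Let $\Phi$ and $\Psi$ denote the matrices of $\ket{\phi}$ and $\ket{\psi}$; then the hypothesis says precisely that every $2\times 2$ minor of $\Phi+e^{i\theta}\Psi$ equals zero for all $\theta\in\mathds{R}$. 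This is the higher-dimensional analogue of the single determinant condition exploited in Lemma~\ref{emC2vale}, the only change being that rank one is now witnessed by a whole family of minors rather than by one determinant.

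Next I would exploit analyticity. Setting $z=e^{i\theta}$, an arbitrary $2\times 2$ minor of $\Phi+z\Psi$ is a polynomial in $z$ of degree at most two, whose constant term is the corresponding $2\times 2$ minor of $\Phi$ and whose coefficient of $z^{2}$ is the corresponding $2\times 2$ minor of $\Psi$. By hypothesis this polynomial vanishes at the infinitely many points $z=e^{i\theta}$ lying on the unit circle, hence it is the zero polynomial. In particular both its constant term and its leading coefficient vanish, so every $2\times 2$ minor of $\Phi$ and every $2\times 2$ minor of $\Psi$ is zero. Therefore $\Phi$ and $\Psi$ each have rank at most one, i.e. $\ket{\phi}$ and $\ket{\psi}$ are product, which is the claim.

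An alternative route, which directly reuses the already established two-qubit case, is to restrict to pairs of two-dimensional coordinate subspaces $V_A\subset\mathcal{H}_A$ and $V_B\subset\mathcal{H}_B$. Since projecting a product vector onto $V_A\otimes V_B$ again yields a product vector, the vector $(P_{V_A}\otimes P_{V_B})(\ket{\phi}+e^{i\theta}\ket{\psi})$ is product for every $\theta$, and Lemma~\ref{emC2vale} shows that the restrictions of $\ket{\phi}$ and of $\ket{\psi}$ to $V_A\otimes V_B$ are product. As this holds for every such $2\times 2$ block, every $2\times 2$ minor of $\Phi$ and of $\Psi$ vanishes, and one concludes exactly as before.

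The only genuinely delicate point is the transition from \emph{product for all $\theta$} to a statement about the full matrices: the hypothesis constrains $z$ only on the unit circle, not on all of $\mathds{C}$, so one must invoke the identity theorem for polynomials (vanishing at infinitely many points forces the zero polynomial) rather than simply substituting $z=0$. It is also worth noting that the rank-at-most-one characterization automatically absorbs the degenerate possibility that some $\Phi+e^{i\theta}\Psi$ is the zero vector, so no separate case analysis is required.
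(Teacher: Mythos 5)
Your proof is correct, and your primary route is genuinely different from the paper's. The paper argues by contradiction: assuming $\ket{\psi}$ entangled, it takes a Schmidt decomposition $\ket{\psi}=\sum_{l}\psi_{l}\ket{ll}$, picks two indices $l_{1},l_{2}$ with nonvanishing coefficients, projects $\ket{\phi}+e^{i\theta}\ket{\psi}$ onto the single block spanned by $\{\ket{l_{1}l_{1}},\ket{l_{1}l_{2}},\ket{l_{2}l_{1}},\ket{l_{2}l_{2}}\}$ (using, exactly as in your alternative route, that a projector of the form $P_{V_{A}}\otimes P_{V_{B}}$ maps product vectors to product vectors), and then applies Lemma~\ref{emC2vale} to contradict the entanglement of the projected $\ket{\psi}$. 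Your second, alternative route is thus essentially the paper's argument globalized to all coordinate $2\times 2$ blocks rather than one well-chosen Schmidt block; the Schmidt basis buys the paper the convenience of needing only one block, at the price of a contradiction argument. Your first route --- coefficient matrices, rank $\leq 1$ equivalent to the vanishing of all $2\times 2$ minors, each minor of $\Phi+z\Psi$ a polynomial of degree at most $2$ in $z$ vanishing on the unit circle and hence identically zero, so that its constant term (a minor of $\Phi$) and its leading coefficient (a minor of $\Psi$) both vanish --- is self-contained: it never invokes Lemma~\ref{emC2vale} (indeed it re-proves it as the case of a single minor), it avoids both the contradiction and the Schmidt decomposition, and it applies verbatim when $\dim\mathcal{H}_{A}\neq\dim\mathcal{H}_{B}$, a case the lemma's statement does not even cover. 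Finally, your explicit appeal to the identity theorem for polynomials (vanishing at infinitely many points of the unit circle, not at $z=0$, forces the zero polynomial) makes precise a point that the paper's own proof of Lemma~\ref{emC2vale} glosses over, since there too the hypothesis constrains $z$ only to $|z|=1$ while the conclusion drawn concerns the polynomial $D$ for all $z$.
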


\begin{proof}
 Let us argue by contradiction. Suppose that $\ket{\psi}$ is entangled,
 thus in the Schmidt decomposition $\ket{\psi}=\sum_{l=1}^{d}\psi_{l}\ket{ll}$ there are, at
least two indexes $l_1,l_2$ such that $\psi_{l_1},\psi_{l_2} \neq 0$. Writing 
$\ket{\phi}=\sum_{k,j}\phi_{k,j}\ket{kj}$ in the same basis as $\ket{\psi}$, 
and defining $\psi_{k,j}=\psi_k \delta_{k,j} $ we get:
\begin{subequations}
 \begin{eqnarray}
 \nonumber
& \forall \theta \in \mathds{R}: 
\ket{\theta}=\ket{\psi}+e^{i\theta}\ket{\phi}=\sum_{k,j}(\psi_{k,j}+e^{i 
\theta} \phi_{k,j})\ket{kj}.
 \end{eqnarray}
\end{subequations}
Therefore $\ket{\theta}$ is product, by hypothesis, for all $\theta \in 
\mathds{R}$. Projecting 
$\ket{\theta}$ at the subspace generated by 
$\{\ket{l_{1}l_{1}},\ket{l_{1}l_{2}},\ket{l_{2}l_{1}},\ket{l_{2}l_{2}}\}$ we obtain:
$$
 \ket{\xi_{\theta}}=\sum_{k,j \in \{l_1,l_2\}}(\psi_{k,j}+e^{i \theta} \phi_{k,j})\ket{kj}.
$$
Since $\ket{\xi_{\theta}} \in \mathds{C}^2 \otimes \mathds{C}^2$ is product for all values of 
$\theta$, we can apply Lemma~\ref{emC2vale} and obtain the desired contradiction.
\end{proof}

\begin{theorem} 
If $\Lambda$ is a unital map acting on $\mathcal{H}_{AB}=\mathds{C}^2 \otimes \mathds{C}^2$ and 
preserves the purity of maximally entangled states, then $\Lambda$ is 
induced by an 
unitary operation.\label{thm2}
\end{theorem}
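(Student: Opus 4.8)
The plan is to reduce to the first theorem of this section (every pure state preserving unital map is induced by a unitary) by showing that the hypotheses force $\Lambda$ to preserve the purity of \emph{every} pure state, not only the maximally entangled ones. As in that theorem, I would start from a Naimark/Stinespring dilation: a unitary $U$ on $\mathcal{H}_{AB}\otimes\mathcal{R}$ and a fixed $\ket{R}\in\mathcal{R}$ with $\Lambda(\rho)=\text{Tr}_{\mathcal{R}}[U(\rho\otimes\borb{R}{R})U^{*}]$, and consider the linear isometry $W\ket{\phi}=U(\ket{\phi}\otimes\ket{R})$. Since $\Lambda$ preserves purity of maximally entangled states, $W\ket{\Phi}$ must be a product vector across the $AB|\mathcal{R}$ cut for every maximally entangled $\ket{\Phi}$, exactly as in the earlier argument.

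The structural input special to two qubits is the magic basis $\{\ket{e_k}\}_{k=1}^{4}$ (the Bell states up to phases), in which a normalized state is maximally entangled precisely when its coordinates are real up to a common global phase (equivalently its concurrence $|\sum_k\mu_k^{2}|$ equals one). In particular all four $\ket{e_k}$ and all real combinations $\cos t\,\ket{e_1}+\sin t\,\ket{e_j}$ are maximally entangled. Writing $W\ket{e_k}=\ket{\chi_k}\otimes\ket{S_k}$, which is product by the previous paragraph, unitality gives $\frac14\sum_k\Lambda(\borb{e_k}{e_k})=\Lambda(\frac{I}{4})=\frac{I}{4}$, hence $\sum_k\borb{\chi_k}{\chi_k}=I$; four rank-one projectors summing to the identity on a four-dimensional space must be mutually orthogonal, so $\{\ket{\chi_k}\}$ is an orthonormal basis of $\mathcal{H}_{AB}$.

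Next I would exploit the real families. Because $\ket{\chi_1}\perp\ket{\chi_j}$, applying $W$ to $\cos t\,\ket{e_1}+\sin t\,\ket{e_j}$ and rescaling by $\cos t$ shows that $\ket{\chi_1}\otimes\ket{S_1}+s\,\ket{\chi_j}\otimes\ket{S_j}$ is product for all real $s=\tan t$. Its reduced operator on $AB$, in the basis $\{\ket{\chi_1},\ket{\chi_j}\}$, equals $\bigl(\begin{smallmatrix}1 & s\bar\sigma\\ s\sigma & s^{2}\end{smallmatrix}\bigr)$ with $\sigma=\braket{S_1|S_j}$, so productness is the vanishing of the determinant $s^{2}(1-|\sigma|^{2})$. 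Imposing this for all $s$ forces $|\sigma|=1$, i.e. $\ket{S_j}=e^{i\gamma_j}\ket{S_1}$ for every $j$. Absorbing the phases into $\ket{\tilde\chi_k}=e^{i\gamma_k}\ket{\chi_k}$ (still orthonormal), we obtain $W\ket{e_k}=\ket{\tilde\chi_k}\otimes\ket{S_1}$ with a \emph{single} reservoir vector, whence by linearity $W\ket{\phi}=\bigl(\sum_k\mu_k\ket{\tilde\chi_k}\bigr)\otimes\ket{S_1}$ is product for every $\ket{\phi}=\sum_k\mu_k\ket{e_k}$. Thus $\Lambda$ preserves all pure states, and the earlier characterization theorem yields that $\Lambda$ is induced by a unitary.

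The main obstacle — and the reason bidimensionality is essential — is precisely that the maximally entangled states are not a linear subspace, so one cannot freely superpose an orthonormal basis of them as in the pure-state-preserving theorem. The magic-basis fact that real combinations remain maximally entangled is what rescues the argument, replacing the phase-rotation Lemma~\ref{emC2vale} of the $\mathds{C}^2\otimes\mathds{C}^2$ case by the real-parameter rank computation above. I would only need to check that the determinant condition holds for infinitely many $s$ (so that $1-|\sigma|^{2}=0$ genuinely follows) and that fixing $\ket{R}$ makes $W$ linear, but these are routine.
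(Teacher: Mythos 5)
Your proof is correct, but it takes a genuinely different route from the paper's. The skeleton is the same, and essentially forced by the problem: dilate, show the images of a chosen basis are product across the $AB|\mathcal{R}$ cut, use unitality to make those images mutually orthogonal, collapse the reservoir factors to a single vector, and read off a unitary. The tools, however, differ. The paper works in the computational basis, whose vectors are \emph{not} maximally entangled; to show that $U(\ket{00}\otimes\ket{R})$ and $U(\ket{11}\otimes\ket{R})$ are product it must pass through the phase family $\ket{00}+e^{i\theta}\ket{11}$ and invoke Lemma~\ref{emCdvale} (hence also Lemma~\ref{emC2vale}), and it then matches the reservoir vectors pairwise ($\ket{R_{00}}\propto\ket{R_{11}}$, $\ket{R_{01}}\propto\ket{R_{10}}$, and finally $\ket{R_{00}}\propto\ket{R_{01}}$ via a four-term maximally entangled combination). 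You instead choose the magic basis, whose elements are themselves maximally entangled, so productness of their images is immediate and the paper's Lemmas~\ref{emC2vale} and~\ref{emCdvale} are bypassed entirely; the Hill--Wootters fact that real combinations of magic-basis vectors stay maximally entangled then feeds a single determinant/rank computation which, together with the orthogonality coming from unitality, forces all reservoir vectors to coincide in one stroke. What you buy is brevity and a more uniform reservoir-collapse argument; what you pay is reliance on the concurrence/magic-basis characterization, an imported result that is strictly two-qubit, whereas the paper's Lemma~\ref{emCdvale} is stated and proved for arbitrary finite dimension and is the sort of ingredient one would want when attacking Conjecture~\ref{conjecture} beyond qubits. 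Two small remarks: your closing appeal to the pure-state-preserving theorem of Section~\ref{tech} is unnecessary, since once $W\ket{e_k}=\ket{\tilde\chi_k}\otimes\ket{S_1}$ with $\{\ket{\tilde\chi_k}\}$ orthonormal, the map $V\ket{e_k}=\ket{\tilde\chi_k}$ is already unitary and $\Lambda(\rho)=V\rho V^{*}$ follows by linearity of $W$; and a single nonzero value of $s$ (say $s=1$, i.e.\ the state $(\ket{e_1}+\ket{e_j})/\sqrt{2}$) already forces $|\sigma|=1$, so no check over infinitely many $s$ is needed.
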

\begin{proof}
 Take a representation of $\Lambda$ in terms of a unitary $U$ acting on a larger space 
$\mathcal{H}_{AB}\otimes\mathcal{H}_{R}$, such that 
$$\Lambda(\rho)=\text{Tr}_{R}[U(\rho\otimes\borb{R}{R})U^{*}],$$ where 
$\ket{R}\in\mathcal{H}_{R}$. With $U(\ket{00}\otimes \ket{R})=\ket{\psi}$ and
$U(\ket{11}\otimes \ket{R})=\ket{\phi}$, we have, for all $\theta \in \mathds{R}$:
\begin{subequations}
 \begin{eqnarray}
 \nonumber
(\ket{00}+e^{i\theta}\ket{11})\otimes \ket{R} \xmapsto{U} \ket{\psi}+e^{i \theta}\ket{\phi}.
 \end{eqnarray}
\end{subequations}
As $\Lambda$ preserves the purity of $(\ket{00}+e^{i\theta}\ket{11})$, the state 
$\ket{\psi}+e^{i \theta}\ket{\phi}$ is product for all $\theta$, with respect to 
$\mathcal{H}_{AB}\otimes\mathcal{H}_{R}$. Lemma~\ref{emCdvale} implies that $\ket{\psi}$
and $\ket{\phi}$ are both product, that is:
\begin{subequations}
 \begin{eqnarray}
  & \ket{00}\otimes\ket{R} \xmapsto{U} \ket{\psi_{00}}\otimes \ket{R_{00}} \\
  &  \ket{11}\otimes\ket{R} \xmapsto{U} \ket{\psi_{11}}\otimes \ket{R_{11}}.
 \end{eqnarray}
\end{subequations}

Let $\mathfrak{B}=\{\ket{\Psi_{\pm}},\ket{\Phi_{\pm}}\}$ be the Bell basis in 
$\mathcal{H}_{AB}$. The map $\Lambda$ satisfies:
\begin{subequations}
 \begin{eqnarray}
  \nonumber 
 & \mathbb{1}=\Lambda(\mathbb{1})=\Lambda(\borb{\Phi_{+}}{\Phi_{+}}+ \\ \nonumber
 & \borb{\Phi_{-}}{\Phi_{-}}+\borb{\Psi_ {+}}{\Psi_{+}}+\borb{\Psi_{-}}{\Psi_{-}}).
 \end{eqnarray}
\end{subequations}
Since the images $\Lambda(\borb{\Phi_{\pm}}{\Phi_{\pm}})$ and 
$\Lambda(\borb{\Psi_{\pm}}{\Psi_{\pm}})$ are $4$ unidimensional projectors ($\Lambda$ preserves 
purity of maximally entangled states)
that sum up to the identity, they must be mutually orthogonal. 

Observe that the combinations $(\ket{\psi_{00}} \otimes \ket{R_{00}}) \pm (\ket{\psi_{11}} \otimes 
\ket{R_{11}})$ must be product with respect to $\mathcal{H}_{AB}\otimes \mathcal{H}_{R}$, because 
they are images of $\ket{\Phi_{\pm}}\otimes \ket{R}$ under $U$. We state that 
$$
\ket{R_{00}}=e^{i \gamma} \ket{R_{11}}.
$$
Otherwise, $\ket{\psi_{00}} \propto \ket{\psi_{11}}$, and then 
$\Lambda(\borb{\Phi_{+}}{\Phi_{+}})=\borb{\Psi_{00}}{\Psi_{00}}=\Lambda(\borb{\Phi_{-}}{\Phi_{-}})$
contradicting the fact that $\Lambda(\borb{\Phi_{\pm}}{\Phi_{\pm}})$ are mutually orthogonal.
Again, from
\begin{subequations}
 \begin{eqnarray*}
  & \ket{01}\otimes\ket{R} \xmapsto{U} \ket{\psi_{01}}\otimes \ket{R_{01}}, \\
  &  \ket{10}\otimes\ket{R} \xmapsto{U} \ket{\psi_{10}}\otimes \ket{R_{10}},
 \end{eqnarray*}
\end{subequations}
we derive that $\ket{R_{01}}=e^{i \delta}\ket{R_{10}}$. Now, define 
$\ket{\xi}=a\ket{\Phi_{+}}+b\ket{\Phi_{-}}+c\ket{\Psi_{+}}+d\ket{\Psi_{-}}$, for a suitable 
choice of constants $a,b,c,d \neq 0$ such that $\ket{\xi}$ is maximally entangled. Therefore
\begin{subequations}
 \begin{eqnarray}
 \nonumber
 & U(\ket{\xi}\otimes \ket{R})=(a\ket{\psi_{00}}+b e^{-i\gamma}\ket{\psi_{11}}) \otimes 
\ket{R_{00}} \\ \nonumber
& + (c\ket{\psi_{01}}+d e^{-i\delta}\ket{\psi_{10}}) \otimes \ket{R_{01}},
\end{eqnarray}
\end{subequations}
and then $\ket{R_{00}}=e^{i\beta}\ket{R_{01}}$. We can define a unitary operator V, acting on 
$\mathcal{H}_{AB}$, given by:

\begin{subequations}
\begin{align}
& \ket{00} \xmapsto{V} \ket{\psi_{00}}, \\ 
& \ket{11} \xmapsto{V} e^{-i \gamma}\ket{\psi_{11}}, \\
& \ket{01} \xmapsto{V} e^{i(\delta-\beta-\gamma)}\ket{\psi_{01}}, \\ 
& \ket{10} \xmapsto{V} e^{-i(\delta+\gamma)}\ket{\psi_{10}}.
\end{align}
\end{subequations}
With this definition, we have $\Lambda(\cdot)=V(\cdot)V^{*}$.
\end{proof}

When $\mathcal{H}_{A}=\mathcal{H}_{B}$, we can define the so-called SWAP operator $S$, by
$S (\ket{\phi}\otimes \ket{\psi})=\ket{\psi}\otimes \ket{\phi}$. If the Hilbert spaces are not the
same, but have the same dimension, we can take any isomorphism $\Psi:\mathcal{H}_{A}\rightarrow
\mathcal{H}_{B}$ between them and define the operators $S_{\Psi}=(\Psi^{-1}\otimes I_{B}) \circ
S\circ (\Psi\otimes I_{B}),$ where $I_{B}$ is the identity operator on $\mathcal{H}_{B}$,
\emph{i.e}, $S_{\Psi}\ket{\phi}\otimes \ket{\psi}=\Psi^{-1}(\ket{\psi})\otimes \Psi(\ket{\phi})$
which we shall also denote by SWAP.

The theorem below characterizes unitary operations acting on composite Hilbert spaces that
preserve product vectors:

\begin{theorem} Let $U$ be a unitary operation acting on a Hilbert space
$\mathcal{H}_{A}\otimes \mathcal{H}_{B}$, where $\mathcal{H}_{A(B)}$ has finite dimension
$d_{A(B)}\geq 2$. Then $U$ is product preserving if, and only if, it is a 
local 
unitary operation or,
for the case 
\textnormal{\normalfont{dim}}$(\mathcal{H}_{A})=$\textnormal{\normalfont{dim
}} $(\mathcal { H } _ { B } )$ , a 
composition
of a local unitary
operation with a SWAP operator. \label{thm3}
\end{theorem}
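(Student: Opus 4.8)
The plan is to recast product-preservation as a statement about how $U$ acts on the ``product fibers'' $\ket{\phi}\otimes\mathcal{H}_B$ and $\mathcal{H}_A\otimes\ket{\psi}$, and to exploit a structural fact about linear subspaces consisting only of product vectors. The \emph{if} direction is immediate: $U_A\otimes U_B$ sends $\ket\phi\otimes\ket\psi$ to $U_A\ket\phi\otimes U_B\ket\psi$, and the SWAP sends it to $\Psi^{-1}\ket\psi\otimes\Psi\ket\phi$, so these and their composition are product preserving. For the converse the workhorse is the following classification, which I would establish first: \emph{if $W\subseteq\mathcal{H}_A\otimes\mathcal{H}_B$ is a linear subspace all of whose nonzero vectors are product, then $W\subseteq\ket{a}\otimes\mathcal{H}_B$ or $W\subseteq\mathcal{H}_A\otimes\ket{b}$ for a fixed $\ket a$ or $\ket b$.} This rests on the elementary rank-one test (which also underlies Lemma~\ref{emCdvale}): for linearly independent product vectors $\ket{a_1}\otimes\ket{b_1}$ and $\ket{a_2}\otimes\ket{b_2}$, the sum is product iff $\ket{a_1}\parallel\ket{a_2}$ or $\ket{b_1}\parallel\ket{b_2}$. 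A short propagation argument over a product basis of $W$ then forces one factor to be globally fixed.

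Second, I would apply this to the image fibers. Since $U$ is a product-preserving bijection, for each $\ket\phi$ the image $U(\ket\phi\otimes\mathcal{H}_B)$ is a $d_B$-dimensional subspace of product vectors, hence of \emph{type I}, $\ket{a_\phi}\otimes\mathcal{H}_B$, or of \emph{type II}, $\mathcal{U}_\phi\otimes\ket{b_\phi}$ with $\dim\mathcal{U}_\phi=d_B$ (the latter requiring $d_B\le d_A$). The type cannot depend on $\phi$: if $\ket{\phi_1}$ were type I and $\ket{\phi_2}$ type II, then for every $\ket\psi$ the vector $U((\ket{\phi_1}+\ket{\phi_2})\otimes\ket\psi)=\ket{a}\otimes\ket{\eta(\psi)}+\ket{u(\psi)}\otimes\ket{b}$ must be product; by the rank-one test this needs $\ket{a}\parallel\ket{u(\psi)}$ or $\ket{\eta(\psi)}\parallel\ket{b}$, each holding for at most one ray of $\psi$, so since $d_B\ge 2$ a generic $\ket\psi$ produces a rank-two image of a product vector, a contradiction. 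Hence all $A$-fibers share a single type.

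Third, I would conclude in each case. In the type-I case the subspaces $\ket{a_i}\otimes\mathcal{H}_B$ (for an orthonormal basis $\ket i$ of $\mathcal{H}_A$) are mutually orthogonal, so $\{\ket{a_i}\}$ is an orthonormal basis; setting $U_A\ket i=\ket{a_i}$, the operator $(U_A^{*}\otimes I)U$ preserves every $\ket i\otimes\mathcal{H}_B$ and is therefore $\sum_i\borb{i}{i}\otimes W_i$. Product-preservation applied to $(\sum_i c_i\ket i)\otimes\ket\psi$ forces each $W_iW_{i'}^{*}$ to be scalar, i.e. $W_i=\lambda_i W_1$, whence $U=(U_A D)\otimes W_1$ is local. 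In the type-II case I would first pin the dimension: a single $B$-fiber of the form $\mathcal{H}_A\otimes\ket{c}$ would make every image vector's $B$-part parallel to $\ket{c}$, contradicting surjectivity, so every $B$-fiber must instead have the form $\ket{\alpha}\otimes\mathcal{V}$ with $\dim\mathcal{V}=d_A$, forcing $d_A\le d_B$; together with $d_B\le d_A$ this gives $d_A=d_B$. With equal dimensions the SWAP $S$ is defined, $S\circ U$ is product preserving and turns every $A$-fiber into type I, so by the previous case $S\circ U$ is local and $U=S^{-1}\circ(\text{local})$ is a local unitary composed with SWAP.

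The main obstacle is the \emph{uniformity of fiber type} together with the dimension bookkeeping in the type-II branch: one must rule out every ``mixed'' configuration and be certain that the SWAP alternative is available only when $d_A=d_B$ (this is exactly where the hypothesis $d_A,d_B\ge 2$ is used, both to guarantee the generic $\ket\psi$ in the mixing step and to exclude the degenerate collapse of all $B$-parts onto one ray). Everything else --- the rank-one test, the peeling-off of the local factor, and the reduction of type II to type I via SWAP --- is routine linear algebra once the classification lemma is in place; conceptually the statement is the unitary instance of the classical Marcus--Moyls/Westwick description of rank-one preservers, which could alternatively be cited wholesale.
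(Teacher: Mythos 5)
Your proof is correct, and it takes a genuinely different route from the paper's. The paper never isolates your key classification lemma (that a subspace consisting entirely of product vectors lies inside $\ket{a}\otimes\mathcal{H}_B$ or inside $\mathcal{H}_A\otimes\ket{b}$); instead it fixes product bases, writes $U\ket{j}_A\otimes\ket{k}_B=\ket{\psi_{jk}}_A\otimes\ket{\phi_{jk}}_B$, and extracts from sums such as $(\ket{j}_A+\ket{j'}_A)\otimes\ket{k}_B$ the same rank-one dichotomy you use, but pairwise, index by index: either the $A$-parts are orthogonal and the $B$-parts proportional, or vice versa. It then proves uniformity of the alternative by a three-index contradiction, rules out mixed alternatives between the two tensor slots by a dimension count, and finishes with explicit phase bookkeeping, $\theta_{jk}=\theta_{j0}+\theta_{0k}\ (\mathrm{mod}\ 2\pi)$, to assemble $U_A\otimes U_B$; the SWAP branch is then handled by repeating the whole analysis with the roles of $A$ and $B$ exchanged. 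Your organization differs at each of these steps: you classify entire image fibers $U(\ket{\phi}\otimes\mathcal{H}_B)$ at once, prove uniformity of fiber type by a covering argument (two proper subspaces cannot cover $\mathcal{H}_B$ when $d_B\geq 2$, which is where your genericity claim about ``at most one ray'' is made precise by injectivity of the maps $\psi\mapsto u(\psi)$ and $\psi\mapsto\eta(\psi)$), replace the phase bookkeeping by the operator argument that $W_{i'}^{*}W_i$ having every vector as an eigenvector must be scalar, and dispose of the SWAP branch by composing with $S$ and invoking the already-settled local branch rather than redoing the work. Both proofs run on the same engine --- the rank-one test for sums of product vectors and the double inequality $d_B\leq d_A\leq d_B$ forcing equal dimensions --- but yours is coordinate-free and structurally cleaner, while the paper's is more explicitly constructive, exhibiting the local unitaries and their phases outright. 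Your closing observation is also apt: the theorem is the unitary instance of the Marcus--Moyls/Westwick description of rank-one preservers, a connection the paper does not draw.
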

\begin{proof} Consider an orthonormal basis in each space
$\{\ket{j}_{A}\}_{j=0}^{\text{dim}(\mathcal{H}_{A})-1}$,
$\{\ket{k}_{B}\}_{k=0}^{\text{dim}(\mathcal{H}_{B})-1}$. The unitary 
operation must map states 
$\ket{j}_{A}\otimes\ket{k}_{B}$ into elements $\ket{\psi_{jk}}_{A}\otimes\ket{\phi_{jk}}_{B}$,
which are mutually orthogonal. Since the images of the product vectors
$(\ket{j}_{A}+\ket{j'}_{A})\otimes\ket{k}_{B}$, that is
$\ket{\psi_{jk}}_{A}\otimes\ket{\phi_{jk}}_{B}+\ket{\psi_{j'k}}_{A}\otimes\ket{\phi_{j'k}}_{B}$
are also product vectors, we must have one of two options
\begin{subequations}
\begin{eqnarray}
&\ket{\psi_{jk}}_{A}\perp\ket{\psi_{j'k}}_{A}\text{    and  
}\ket{\phi_{jk}}_{B}\propto\ket{\phi_{j'k}}_{B}, \label{Aop1}\\
\nonumber &\text{   or}\\
&\ket{\phi_{jk}}_{B}\perp\ket{\phi_{j'k}}_{B}\text{    and
}\ket{\psi_{jk}}_{A}\propto\ket{\psi_{j'k}}_{A}. \label{Aop2}
\end{eqnarray}
\end{subequations}
For a fixed $k$, if one of the options is valid for a pair $j$ and $j'$, it must be valid for all
such pairs. Indeed, suppose that the first option is valid for, say, $j=0$
and $j'=1$ and the second for $j=0$ and $j'=2$. The
image of the product vector $(\ket{1}_{A}+\ket{2}_{A})\otimes\ket{k}_{B}$, given by
$\ket{\psi_{1k}}_{A}\otimes\ket{\phi_{1k}}_{B}+\ket{\psi_{2k}}_{A}\otimes\ket{\phi_{2k}}_{B}$
would be an entangled vector, since we would have
$\ket{\psi_{1k}}_{A}\perp\ket{\psi_{0k}}_{A},\ket{\psi_{2k}}_{A}\propto\ket{\psi_{0k}}_{A}$,
$\ket{\phi_{1k}}_{B}\propto\ket{\phi_{0k}}_{B}$ and
$\ket{\phi_{2k}}_{B}\perp\ket{\phi_{0k}}_{B}$. Therefore,
$\ket{\psi_{1k}}_{A}\perp\ket{\psi_{2k}}$ and
$\ket{\phi_{1k}}_{B}\perp\ket{\phi_{2k}}_{B}$.

$i)$ Assume that \eqref{Aop1} is true. That means that the vectors
$\ket{\phi_{jk}}_{B}$ are proportional to each other for fixed $k$, while the vectors
$\ket{\psi_{jk}}_{A}$, also for fixed $k$, form an orthonormal basis. We can write then
$U\ket{j}_{A}\otimes\ket{k}_{B}=e^{i\theta_{jk}}\ket{\psi_{jk}}_{A}\otimes\ket{\phi_{0k}}_{B}$.

If we consider the image of the vectors
$\ket{j}_{A}\otimes(\ket{k}_{B}+\ket{k'}_{B})$, we deduce that we have the following options
\begin{subequations}
\begin{eqnarray}
&\ket{\phi_{jk}}_{B}\perp\ket{\phi_{jk'}}_{B}\text{    and  
}\ket{\psi_{jk}}_{A}\propto\ket{\psi_{jk'}}_{A}, \label{Aop1a}\\
\nonumber &\text{   or}\\
&\ket{\psi_{jk}}_{A}\perp\ket{\psi_{jk'}}_{A}\text{    and
}\ket{\phi_{jk}}_{B}\propto\ket{\phi_{jk'}}_{B}. \label{Aop2a}
\end{eqnarray}
\end{subequations}
Again, similarly to what we have above, if one of the option is valid for a 
pair $k$ and $k'$, for
fixed $j$, it must be valid for all such pairs. But given that
\eqref{Aop1} is true, now only \eqref{Aop1a} can also be.
Indeed, if \eqref{Aop2a} were true, we would have, for example, the 
subspace generated
by the vectors $\{\ket{j}_{A}\otimes\ket{0}_{B}, \ket{0}_{A}\otimes\ket{k}_{B}\}$,
of dimension $\text{dim}(\mathcal{H}_{A})+\text{dim}(\mathcal{H}_{B})-1$,
mapped to the subspace $\mathcal{H}_{A}\otimes\ket{\phi_{00}}$, of dimension 
dim$(\mathcal{H}_{A})$,
contradicting the fact the $U$ is unitary.

Since we have that \eqref{Aop1a} is true, we can write
$U\ket{j}_{A}\otimes\ket{k}_{B}=e^{i\theta_{jk}}\ket{\psi_{j0}}_{A}\otimes\ket{\phi_{0k}}_{B}$.
Using this expression, and demanding that the states
$(\ket{j}_{A}+\ket{j'}_{A})\otimes(\ket{k}_{B}+\ket{k'}_{B})$ are of the product form for all pairs
$j,j'$ and $k,k'$, we obtain
$e^{i(\theta_{jk}+\theta_{j'k'})}=e^{i(\theta_{jk'}+\theta_{j'k})}$.
In particular, if $k'=j'=0$, we get $\theta_{jk}=\theta_{j0}+\theta_{0k}(\text{mod }2\pi)$, since
$\theta_{00}=0$ by construction. Finally, we have $U=U_{A}\otimes
U_{B}$ with $U_{A}\ket{j}_{A}=e^{i\theta_{j0}}\ket{\psi_{j0}}_{A}$ and
$U_{B}\ket{k}_{B}=e^{i\theta_{0k}}\ket{\phi_{0k}}_{B}$.

$ii)$ Assume that \eqref{Aop2} is true. Note firstly that it is necessary 
to have
$\text{dim}(\mathcal{H}_{A})\geq\text{dim}(\mathcal{H}_{B})$ since, for
fixed $k$, we are varying over $\text{dim}(\mathcal{H}_{A})$ orthonormal 
vectors on $A$, which
therefore give rise to a set of orthonormal vectors $\ket{\phi_{jk}}_{B}$ in $\mathcal{H}_{B}$.
So
$U(\ket{j}_{A}\otimes\ket{k}_{B})=e^{i\tilde{\theta}_{jk}}\ket{\psi_{0k}}_{A}\otimes\ket{\phi_{jk}}_
{ B
}$.
Now only the option \eqref{Aop2a} can be true, so again we have
$\text{dim}(\mathcal{H}_{B})\geq\text{dim}(\mathcal{H}_{A})$, and therefore
$\text{dim}\mathcal{H}_{A}=\text{dim}\mathcal{H}_{B}$, which allows us to 
write
$U(\ket{j}_{A}\otimes\ket{k}_{B})=e^{i\tilde{\theta}_{jk}}\ket{\psi_{0k}}_{A}\otimes\ket{\phi_{j0}}_
{ B
}$.
Considering again that the image of the states
$(\ket{j}_{A}+\ket{j'}_{A})\otimes(\ket{k}_{B}+\ket{k'}_{B})$ must be product vectors, we have
$\tilde{\theta}_{jk}=\tilde{\theta}_{j0}+\tilde{\theta}_{0k}(\text{mod
}2\pi)$. In other words $U=(U_{A}\otimes U_{B})\circ
S_{\Psi}$, where $U_{A}\ket{j}_{A}=e^{i\tilde{\theta}_{0j}}\ket{\psi_{0j}}$,
$U_{B}=e^{i\tilde{\theta}_{k0}}\ket{\phi_{k0}}_{B}$ and $\Psi \ket{k}_{A}=\ket{k}_{B}$.
\end{proof}

Putting these results together we have the following:
\begin{corollary}
 If $U$ is a unitary operator acting on a Hilbert space $\mathcal{H}_{A}\otimes \mathcal{H}_{B}$,
where
$\mathcal{H}_{A(B)}$ has finite dimension and preserves entangled states, then it is a local unitary
operation or,
for the case 
\emph{dim}$(\mathcal{H}_{A})=\text{\emph{dim}}(\mathcal{H}_{B})$, a
composition
of a local unitary
operation with a SWAP operator.\label{corentan}
\end{corollary}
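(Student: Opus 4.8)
The plan is to deduce this directly from Theorem~\ref{thm3} by passing to the inverse unitary, so that essentially no new work is required. First I would unwind the phrase ``preserves entangled states'': it means that $U\ket{\psi}$ is entangled whenever $\ket{\psi}$ is entangled. Taking the contrapositive and substituting $\ket{\chi}=U\ket{\psi}$, so that $\ket{\psi}=U^{*}\ket{\chi}$ ranges over all pure states as $\ket{\chi}$ does (since $U$ is a bijection on pure states), this says precisely that $U^{*}\ket{\chi}$ is a product vector whenever $\ket{\chi}$ is a product vector. In other words, $U$ preserves entangled states if and only if $U^{*}=U^{-1}$ is \emph{product preserving} in the sense of Theorem~\ref{thm3}.

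With this reformulation the analytic content is already supplied by the earlier theorem. I would simply apply Theorem~\ref{thm3} to the unitary $U^{-1}$, concluding that $U^{-1}$ is either a local unitary $V_{A}\otimes V_{B}$ or, in the equal-dimension case $\dim(\mathcal{H}_A)=\dim(\mathcal{H}_B)$, a composition $(V_{A}\otimes V_{B})\circ S_{\Psi}$ of a local unitary with a SWAP.

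It then remains to verify that inversion maps each of these two classes back into itself, so the conclusion transfers from $U^{-1}$ to $U$. The local case is immediate, since $(V_{A}\otimes V_{B})^{-1}=V_{A}^{*}\otimes V_{B}^{*}$ is again local. For the SWAP case I would use that conjugating a local operator by a SWAP yields another local operator: writing $U=\big((V_{A}\otimes V_{B})\circ S_{\Psi}\big)^{-1}=S_{\Psi}^{-1}\circ(V_{A}^{*}\otimes V_{B}^{*})$ and inserting $S_{\Psi}S_{\Psi}^{-1}$ lets me move the local factor past the SWAP, exhibiting $U$ once more as (local)$\,\circ\,$(SWAP). This last bookkeeping with $S_{\Psi}$ is the only step requiring care and is the main, though minor, obstacle; everything else is a direct citation of Theorem~\ref{thm3} together with the elementary contrapositive observation.
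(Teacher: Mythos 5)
Your proposal is correct and follows essentially the same route as the paper's own proof: observe that $U^{-1}$ is product preserving, apply Theorem~\ref{thm3} to $U^{-1}$, and invert back. The only difference is that you spell out the bookkeeping of commuting the local factor past $S_{\Psi}$ (using that SWAP-conjugation of a local operator is local and $S_{\Psi}^{-1}=S_{\Psi}$), a step the paper leaves implicit.
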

\begin{proof}
If $U$ preserves entangled states, its inverse $U^{-1}$ preserves product states. From
Theorem~\ref{thm3}, there are unitaries $V_{A}$ and $V_{B}$ acting on
$\mathcal{H}_{A}$ and $\mathcal{H}_{B}$, respectively, such that $U^{-1}=V_{A}\otimes V_{B}$ or
$U^{-1}=S\circ V_{A}\otimes V_{B}$, therefore $U=U_{A}\otimes U_{B}$ or $U=U_{A}\otimes
U_{B}\circ S$, with $U_{A}=V_{A}^{-1}$ and $U_{B}=V_{B}^{-1}$. 
\end{proof}

\section{Discussion} \label{Disc}

Although we could not prove Conjecture~\ref{conjecture} in its full 
generality, we manage to do it for
some large and important families of quantum dynamics. 
They include all 
possible dynamics for a bipartite
closed system, whatever interaction the parts might have and whatever time variation their
Hamiltonian may have. 
For qubits a much larger class of dynamics possibilities were considered,
only requiring a technical condition (unitality) on CPTP maps describing 
the time evolution. 
Since the proof for qubits seems quite technical and the geometric ingredients are the same for other finite dimensions, the Conjecture that the only class of bipartite dynamics not to show FTD is the local unitaries must hold, but still demands a final proof.

The requirement of finite dimensional Hilbert spaces seems to be essential. 
Indeed, the geometrical insight is based on the fact that the set
of separable states has non-empty interior, which ceases to be true whenever one of the Hilbert
spaces is of infinite dimension~\cite{infdim}. 
Of course, even in that 
case, where generically one does not expect FTD, 
many physically relevant dynamics actually can show it, such as those 
preserving Gaussian states~\cite{terra3}.  

Other situation where topology changes, and consequently entanglement dynamics changes, is when one restricts to pure states.
There, the set of separable states (indeed, product states) has empty interior.
For these systems, FTD can only happen if ``hand tailored'', {\textit{e.g.}}: starting from an entangled state, some family of global unitaries is applied up to a time when the state is product, from this time on, only local unitaries are applied.
This is clearly not generic in the set of dynamics. 

As a last commentary, it is natural to remember that for practical implementations of quantum information processing, it is important to fight against FTD.
Our results about the genericity of FTD do not make this fight impossible.
Even for dynamics where FTD does happen, is is natural to search for 
initial 
states where can be avoided, or, at least, delayed~\cite{review,primeiramorte,ugo1,ugo2}.

\begin{acknowledgments}
We thank Marcelo F. Santos for interesting discussions and CNPq, FAPEMIG and CAPES for financial support. This work is part
of the Brazilian National Institute of Quantum Information.
\end{acknowledgments}


\begin{thebibliography}{99}

\bibitem{entanquant} V. Vedral, M.B. Plenio, M.A. Rippin, and P.L. Knight, \emph{Phys. Rev. Lett.}
\textbf{78}, 2275
\bibitem{review} L. Aolita, F. de Melo, and L. Davidovich,  \emph{Rep. Prog. 
Phys.} \textbf{78}, 042001 (2015).
\bibitem{primeiramorte} K. \.{Z}yczkowski, P. Horodecki, M. Horodecki, and R. Horodecki,  \pra
\textbf{65}, 012101 (2001); T. Yu and J.~H. Eberly, \prb {\bf{66}}, 193306 (2002); L. Di\'osi,
{\it{Lect. Notes Phys.}} {\bf{622}}, 157 (2003); P.~J. Dodd and J.~J. Halliwell, \pra {\bf{69}},
052105 (2004); M. Fran\c ca Santos, P. Milman, L. Davidovich, and N. Zagury
Phys. Rev. A \textbf{73} 040305(R) (2006).
\bibitem{terra} M. O. Terra Cunha, \emph{New J. Phys.} \textbf{9}
237 (2007).
\bibitem{terra2} R. C. Drumond and M. O. Terra Cunha, \emph{J. Phys. A: Math. Theor}. \textbf{42} 
285308 (2009).
\bibitem{terra3} R. C. Drumond, L. A. M. Souza, and M. Terra Cunha \emph{Phys. Rev. A} \textbf{82}, 
042302 (2010).
\bibitem{rock} Rockafellar, R. Tyrrell, \emph{Convex analysis} (Princeton university press,
1970).
\bibitem{Witnessed}F.G.S.L. Brand\~ao, \emph{Phys. Rev. A} \textbf{72}, 022310  (2005); 
D. Cavalcanti, F.G.S.L. Brand\~ao, and M.O. Terra Cunha, \emph{Phys. Rev. A} \textbf{72}, 040303R   (2005).  
\bibitem{plenioetal} A. Rivas, S. F. Huelga and M. B. Plenio \emph{Rep. 
Prog. Phys.} \textbf{77} 094001 (2014). 
\bibitem{lindblad} G. Lindblad, \emph{Commun. Math. Phys.} \textbf{48}, 119 (1976).
\bibitem{infdim} R. Clifton and H. Halvorson  \emph{Phys. Rev. A}, \textbf{61} 012108 (1999).
\bibitem{ugo1} F. Benatti, R. Floreanini, and U. Marzolino, \emph{Phys. 
Rev. A}, \textbf{85}, 042329 (2012).
\bibitem{ugo2} U. Marzolino, \emph{Europhysics Letters}, \textbf{104}, 4 
(2013).




\end{thebibliography}
\end{document}